\newtheorem{theorem}{Theorem}
\newtheorem{lemma}{Lemma}
\numberwithin{equation}{section}
\begin{document}
\pagestyle{plain}




\title{\LARGE\textbf{Global existence of classical static solutions of four dimensional Einstein-Klein-Gordon system}}

\author{Mirda Prisma Wijayanto, Emir Syahreza Fadhilla,\\ Fiki Taufik Akbar, Bobby Eka Gunara\footnote{Corresponding author}\\ 
\\
 \textit{\small Theoretical High Energy Physics Research Group, }\\
\textit{\small Faculty of Mathematics and Natural Sciences,}\\
\textit{\small Institut Teknologi Bandung}\\
\textit{\small Jl. Ganesha no. 10 Bandung, Indonesia, 40132}
\\  \\
\\
\small email:mirda.prisma.wijayanto@students.itb.ac.id, emirsyahreza@students.itb.ac.id,\\ \small ftakbar@itb.ac.id, bobby@itb.ac.id}

\date{\today}

\maketitle




\begin{abstract}
In this paper, we prove the global existence of classical static solutions of Einstein's gravitational theory coupled to a real scalar field where spacetime admits spherically symmetry. The equations of motions can then be reduced into a single first-order integro-differential equation. First, we obtain the decay estimates of the solutions. Then, to prove the global existence, we use the contraction mapping theorem in the appropriate function spaces.
\end{abstract}

\section{Introduction}
\label{sec:intro}
One of the most interesting problems in general relativity is the study of gravitational waves which bring some information for many astronomical phenomena such as black holes collision, oscillation of the black hole horizon, the fusion of the two neutron stars in a binary pulsar system, etc. However, detecting gravitational waves require very precise devices and detectors. Apart from the problem regarding the difficulties of observing gravitational waves, another attempt to study gravitational waves is to create a mathematical model where gravity is coupled with matter fields, leading to a dynamical theory of fields that lives on a curved space-time that is also evolving in time. Such theories are expected to provide a model for gravitational waves coming from a distribution of matter as its source. One of the simple ways to do so is by introducing a minimally coupled term of a scalar (spin-0) field, also known as Klein-Gordon field, in the Einstein-Hilbert Lagrangian \cite{Carrol}. The resulting equation of motion for the fields is the Einstein equation with energy-momentum tensor coming from Klein-Gordon Lagrangian. As such, an equivalent way to construct the Cauchy problem for this model is by incorporating scalar fields into Einstein's equations with an additional equation from the divergence-less property of the scalar field's energy-momentum tensor. 

The Cauchy problem of the Einstein equations coupled with the scalar field equation was extensively studied for the past few decades. This problem was initiated by Christodolou \cite{Chris1} which considers the Einstein equations coupled with the massless scalar field as the gravity-inducing matter field in which spherically symmetric solutions are proven to exist for small initial data (and for large initial data in \cite{Chris2}). The massless Einstein-scalar system considered in the works mentioned above is special because it is proven that self-similar solutions of this system can develop a singularity in finite time for some specified initial conditions \cite{Christodoulou:1994}. Thus, it is one of the examples of naked singularity formations within Einstein's general relativity. However, it is later found that there are instabilities in this naked singularity formation, in agreement with the problem of ``cosmic censorship'' conjecture in this model that is addressed in \cite{Christodoulou:1999}.

Another important study of the coupled Einstein and scalar system is done by Malec \cite{Malec} where the Cauchy data is given on the space-like hyper-surface, as well as by examining the local existence of the Cauchy problem outside the outgoing null hyper-surfaces. Furthermore, the initial-value problem of spherically symmetric solutions to the coupled Einstein and nonlinear scalar system has been considered by Chae \cite{Chae} in four dimensions. The nonlinearity of the Einstein-scalar system considered by Chae comes from the potential term in the Klein-Gordon model that is chosen to be proportional to the scalar field to the power of any positive real number greater than or equal to three. This result opens up a new possible approach to the study of the behavior of a dynamical Einstein-scalar system with arbitrary potential terms. There are also several works of the Einstein-scalar system including \cite{Nout,Reiris,Luk,Costa} and the references therein.

In the present study, we study the global initial value problem for the Einstein-scalar system. In general relativity, this problem is devoted to answering the question of what is the condition of the initial data for the Einstein equation that provides the geodesically complete spacetime which is asymptotic to the Minkowski spacetime in the future infinity. We organize this paper as follows: In section 2, we briefly review the construction of the Einstein-scalar system in four dimensions. We start by introducing the coordinate system of our model. Then, we construct the Einstein equation and the single first-order evolution equation as the main problem. Section 3 is devoted to showing that the solution has decay properties in the $k^{th}-$order. In section 4, we employ the contraction mapping principle in the appropriate Banach space. In the last section, we give the main theorem of the global existence of a classical solution.

\section{The Einstein-Scalar Equation}
\subsection{The Coordinate System}
\label{sec:EKG}
Let us consider a Lorentzian manifold locally diffeomorphic to $\mathbb{R}^{1+3}$. The group $SO(3)$ acts as an isometry and the group orbits are the metric spacelike $2$-spheres in which the invariants of the group form a timelike curve in space-time. We have a time-like half-plane that passes through each point on $2$-spheres, with the boundary in the central world line, which is orthogonal to the group orbits and intersects each group orbit at a single point.

Let us denote $A$ the area of the group orbits. We introduce the radial coordinate $r$ satisfies $A=4\pi r^2$, such that the central world line is at $r=0$, and the world lines $r=r_0$ (constant) in each half-plane are all timelike. We also define a timelike coordinate $u$ with the property that $u$ is constant on the future light cone of each point on the central world line. On a world line $r=r_o$, $u$ tends to the proper time as $r_0\rightarrow \infty$. Hence, the metric of our spacetime can be written down in the form \cite{Chris1}
\begin{eqnarray}\label{metric}
\mathrm{d}s^2 = -e^{2F(u,r)}\mathrm{d}u^2 - 2e^{F(u,r)+G(u,r)}\mathrm{d}u\mathrm{d}r + r^2 \mathrm{d}\theta^2 + r^2\sin^2\theta \mathrm{d}\varphi^2,
\end{eqnarray}
where $F$ and $G$ are functions of $u$ and $r$, and $F\rightarrow 0$ as $r\rightarrow \infty$ at each $u$. We also require that $G\rightarrow 0$ as $r\rightarrow \infty$ at each $u$, since we only consider spacetimes that are asymptotically flat at future null infinity. This coincides with the cosmic censorship conjecture which states that there exists a solution of the Einstein equations for a given arbitrary asymptotically flat initial data which is a globally hyperbolic spacetime possessing a complete future null infinity \cite{Chris1}. Furthermore, since we only consider the future of some initial future light cone with vertex at the center, we provide the range of the coordinates $u$ and $r$ such that $0\leq u<\infty$ and $0\leq r<\infty$ respectively, where $u=0$ denotes the initial future light cone. Hence, our coordinate system is called the Bondi coordinate system.

\subsection{The Einstein Field Equation}
We write down the nonzero components of the Ricci tensor related to the metric (\ref{metric})
\begin{eqnarray}
R_{00}&=& -e^{F-G}\left[\partial_1\partial_0 (F+G) +\frac{2}{r}\partial_0 G\right]\nonumber\\
&&+ e^{2(F-G)}\left[(\partial_1F)^2+\partial_1(\partial_1 F)-\partial_1F\partial_1 G +\frac{2}{r}\partial_1 F \right]\label{R00}\\
R_{01}&=&R_{10}= -\partial_1\partial_0(F+G)+e^{F-G}\left[\partial_1(F-G)\partial_1 F + \partial_1(\partial_1 F)+\frac{2}{r}\partial_1 F\right]\label{R01}\\
R_{11}&=&\frac{2}{r}\partial_1(F+G)\label{R11}\\
R_{22}&=&1-e^{-2G}\left[r\partial_1(F-G)+1\right]\label{R22}\\
R_{33}&=&R_{22}\sin^2\theta\label{R33},
\end{eqnarray}
and the scalar curvature
\begin{eqnarray}\label{Ricci scalar}
R&=&\frac{2}{r^2}+2e^{-(F+G)}\partial_1\partial_0(F+G)\nonumber\\
&&+e^{-2G}\left[-2(\partial_1 F)^2+2(\partial_1G)(\partial_1 F)-2\partial_1(\partial_1 F)-\frac{4}{r}\partial_1(F-G)-\frac{2}{r^2}\right]
\end{eqnarray}
respectively, where we have denoted $\partial_0=\frac{\partial}{\partial u}$, and $\partial_1=\frac{\partial}{\partial r}$.  

The action of our system has the form
\begin{eqnarray}\label{action}
S=\int d^4x \sqrt{-g}\left\{\frac{1}{2}\partial_\mu\phi\partial^\mu\phi - V(\phi)\right\},
\end{eqnarray}
where $g$ is the determinant of the metric (\ref{metric}), and $\phi$ is the real scalar field.  Let us denote the scalar potential $V(\phi)$ fulfils the estimate
\begin{equation}\label{estimasi Potensial}
|V(\phi)|+\left|\frac{\partial V(\phi)}{\partial \phi}\right||\phi|+\left|\frac{\partial^2V(\phi)}{\partial\phi^2}\right||\phi|^2\leq K_0 |\phi|^{p+1},
\end{equation}
where $K_0\geq 0$ and $p\in[3,\infty)$, $\forall\phi\in\mathbb{R}$.

Now, we write the Einstein equations
\begin{eqnarray}\label{Einstein}
R_{\mu\nu}-\frac{1}{2}g_{\mu\nu}R=8\pi T_{\mu\nu},
\end{eqnarray}
with the energy-momentum tensor
\begin{eqnarray}\label{Energy-Momentum Tensor}
T_{\mu\nu}=\partial_\mu\phi\partial_\nu \phi-\frac{1}{2}g_{\mu\nu}g^{\alpha\beta}\partial_\alpha\phi\partial_\beta\phi + g_{\mu\nu}V(\phi),
\end{eqnarray}
where we have denoted $T$ as the trace of $T_{\mu\nu}$. Using (\ref{R11}), we obtain the $\{11\}$ component of (\ref{Einstein}) as
\begin{eqnarray}\label{rr}
\frac{\partial}{\partial r}(F+G)=4\pi r \left(\frac{\partial \phi}{\partial r}\right)^2.
\end{eqnarray}
The solution of (\ref{rr}) fulfills the asymptotic condition $F+G\rightarrow 0$ as $r\rightarrow \infty$ such that
\begin{eqnarray}\label{F+G}
F+G = -4\pi \int_{r}^{\infty} s\left(\frac{\partial\phi}{\partial s}\right)^2\mathrm{d}s.
\end{eqnarray}
From $(\ref{R22})$ and $(\ref{R33})$, we write down the $\{22\}$ (or $\{33\}$) components of (\ref{Einstein}) as
\begin{eqnarray}\label{ij}
\frac{\partial}{\partial r}(F-G)=-\frac{1}{r}\left(-e^{2G}+1\right)+8\pi  r e^{2G}V(\phi).
\end{eqnarray}
We put the form of the $\{00\}$ and $\{01\}$ components of  (\ref{Einstein}) in Appendix \ref{Appendix Einstein}. As we see from (\ref{rr}), (\ref{ij}), (\ref{00}), and (\ref{01}), there is no term containing the second derivative with respect to $u$ and $r$ on $F$ and $G$. Thus, the metric functions $F$ and $G$ are no longer dynamical fields, but they simply become constraints. In the rest of the paper we will use (\ref{rr}) and (\ref{ij}) for our analysis rather than (\ref{00}), and (\ref{01}) since it is easier and already common in the standard literature \cite{Chris1, Chris2}.

\subsection{The Scalar Equation of Motions}
From the action (\ref{action}), we obtain the equation of motion
\begin{eqnarray}\label{scalar}
\Box \phi = -\frac{\partial V(\phi)}{\partial \phi},
\end{eqnarray}
where $\Box$ is defined as the d'Alembert operator with respect to the metric (\ref{metric}) as follows 
\begin{eqnarray}
\Box\phi = -2 e^{-(F+G)}\left[\frac{\partial^2\phi}{\partial u \partial r}+\frac{1}{r}\frac{\partial\phi}{\partial u}\right]+ e^{-2G}\left[\frac{\partial^2\phi}{\partial r^2}+\frac{\partial\phi}{\partial r}\left(\frac{2}{r}+\frac{\partial F}{\partial r}-\frac{\partial G}{\partial r}\right)\right].
\end{eqnarray}

Let us introduce a new function
\begin{eqnarray}
h=\frac{\partial}{\partial r}(r\phi).
\end{eqnarray}
From the definition of the mean value $\bar{f}(u,r):=\frac{1}{r}\int_{0}^{r}f(u,s)\mathrm{d}s$,
we have
\begin{eqnarray}
\phi=\bar{h},~~~\frac{\partial}{\partial r}\phi = \frac{h-\bar{h}}{r}.
\end{eqnarray}
Setting
\begin{eqnarray} 
g:=e^{F+G},~~~\tilde{g}:= e^{F-G},
\end{eqnarray}
we rewrite (\ref{F+G}) as
\begin{eqnarray}\label{g2}
g=\exp\left[-4\pi\int_{r}^{\infty} \frac{(h-\bar{h})^2}{s}\mathrm{d}s \right].
\end{eqnarray}
Integrating (\ref{ij}), we have
\begin{eqnarray} \label{g tilde}
\tilde{g}=\bar{g}+\frac{8\pi}{r}\int_{0}^{r}s^2 gV(\bar{h})\mathrm{d}s.
\end{eqnarray}

We introduce a new operator
\begin{eqnarray}
\mathcal{D}=\frac{\partial}{\partial u} - \frac{\tilde{g}}{2}\frac{\partial}{\partial r},
\end{eqnarray}
that describes the derivative along the incoming light rays parametrized by $u$. Thus, we can write (\ref{scalar}) as the single first-order integro-differential equation \footnote{Eq.(\ref{Dh}) is the complete and the correct first-order integro-differential equation compared to Eq.(2.13) in \cite{Chae} if we take $V=-\frac{1}{p+1}|\phi|^{p+1}$.}
\begin{eqnarray}\label{Dh}
\mathcal{D}h = \frac{1}{2r}(h-\bar{h})\left(g-\tilde{g}\right)+4\pi g r (h-\bar{h}) V(\bar{h})+\frac{gr}{2}\frac{\partial V(\bar{h})}{\partial\bar{h}}.
\end{eqnarray}
This provides the nonlinear evolution equation with respect to $h$.

\section{Decay Estimates}
Suppose we have an initial data $h(0,r)$. Setting $k\in[3,\infty)$. Then, we define the function space
\begin{align}\label{space X}
X=\{h(.,.)\in C^1([0,\infty)\times [0,\infty))~|~\|h\|_X<\infty \},
\end{align} 
with the norm
\begin{align}\label{norm x}
\|h\|_X :=\sup_{u\geq 0} \sup_{r\geq 0} \left\{(1+r+u)^{k-1}|h(u,r)|+(1+r+u)^k \left|\frac{\partial h}{\partial r}(u,r)\right|\right\}.
\end{align}
We also define
\begin{align}
X_0 = \{h(.)\in C^1([0,\infty))~|~\|h\|_{X_0}<\infty \},
\end{align}
with the norm
\begin{align}\label{norm x0}
\|h\|_{X_0}:=\sup_{r\geq 0}\left\{(1+r)^{k-1}|h(r)|+(1+r)^k\left|\frac{\partial h}{\partial r}(r) \right| \right\}.
\end{align}
\begin{lemma} \label{Lemma 1}
	Let us denote $\|h\|_X=x$ and $\|h(0,.)\|_{X_0}=d$. Setting $k\in[3,\infty)$ and $p\in[k,\infty)$. Given the initial data $h(0,r)\in C^1[0,\infty)$ such that $h(0,r)=O(r^{-{(k-1)}})$ and $\frac{\partial h}{\partial r}(0,r)=O(r^{-k})$.  Then, the solution of (\ref{Dh}) has the decay properties:
	\begin{align} \label{decay1}
	|h(u,r)|\leq \frac{C(d+x^3+x^p+x^{p+2})\exp\left[C(x^2+x^{p+1})\right]}{(1+r+u)^{k-1}},
	\end{align}
	and
	\begin{align} \label{decay2}
	\left|\frac{\partial h}{\partial r}(u,r) \right|\leq& \frac{C(d+x^3+x^{k+2}+x^p+x^{p+2}+x^{p+4})(1+x^2+x^{k+1}+x^{p+1}+x^{p+3})}{(1+r+u)^k}\nonumber\\
	&\times \exp\left[C(x^2+x^{k+1}+x^{p+1})\right] ,
	\end{align}
	for any constant $C > 0$.
\end{lemma}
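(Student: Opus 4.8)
\medskip
\noindent\textbf{Outline of the proof.}
The plan is to integrate the evolution equation (\ref{Dh}) along the incoming characteristics of $\mathcal{D}$ and to control the three source terms on its right-hand side by weighted a priori estimates that follow from the hypothesis $\|h\|_X=x<\infty$. For a fixed point $(u,r)$ let $R(\cdot)$ on $[0,u]$ be the characteristic through it, i.e. the solution of $\frac{\mathrm{d}R}{\mathrm{d}u'}=-\frac12\tilde g(u',R(u'))$ with $R(u)=r$ (well defined since $\tilde g$ is Lipschitz in $r$ when $h\in C^1$), and put $r_0:=R(0)\geq r$. Before integrating I would assemble the elementary bounds on the auxiliary objects: from (\ref{g2}) and the decay of $h$ one gets $0<e^{-Cx^2}\leq g\leq 1$ together with a weighted bound on $1-g$; from (\ref{g tilde}), the potential estimate (\ref{estimasi Potensial}) and $p\geq k$ one gets $\tilde g=1+O(x^2)+O(x^{p+1})$ in a weighted sense, in particular $e^{-Cx^2}\lesssim\tilde g\lesssim 1+Cx^{p+1}$ and $2G=\log(g/\tilde g)$ bounded above and below; the refined two-parameter estimates for the mean value and its complement,
\begin{align}
|\bar h(u,r)|\leq \frac{Cx}{(1+r+u)(1+u)^{k-2}},\qquad |h(u,r)-\bar h(u,r)|\leq \frac{Cx\,\min\{r,1\}}{(1+r+u)(1+u)^{k-2}},\nonumber
\end{align}
the latter obtained from $h-\bar h=\frac1r\int_0^r t\,\partial_t h(u,t)\,\mathrm{d}t$ and vanishing like $r$ at the origin; and $|V(\bar h)|\leq K_0|\bar h|^{p+1}$, $|\partial_{\bar h}V(\bar h)|\leq K_0|\bar h|^{p}$, $|\partial^2_{\bar h}V(\bar h)|\leq K_0|\bar h|^{p-1}$ from (\ref{estimasi Potensial}). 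I stress that the averaging operator genuinely loses decay — at fixed $u$ one only has $\bar h=O(r^{-1})$ as $r\to\infty$, not $O(r^{-(k-1)})$ — so these sharp forms, rather than the crude $\|\cdot\|_X$-bound, must be fed into the potential terms.

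The geometric ingredient is that the lower bound $\tilde g\geq e^{-Cx^2}$ pushes the backward characteristic outward: $r_0-r=\frac12\int_0^u\tilde g(u',R(u'))\,\mathrm{d}u'\geq\frac12e^{-Cx^2}u$, hence $1+R(u')+u'\geq\frac12e^{-Cx^2}(1+r+u)$ for every $u'\in[0,u]$, and in particular $(1+r_0)^{-(k-1)}\leq e^{C(k-1)x^2}(1+r+u)^{-(k-1)}$. This is precisely what converts the decay of the initial data into the asserted $(1+r+u)^{-(k-1)}$ decay, at the price of a factor $\exp[Cx^2]$. Integrating (\ref{Dh}) along the characteristic,
\begin{align}
h(u,r)=h(0,r_0)+\int_0^u\Big[\tfrac{1}{2R}(h-\bar h)(g-\tilde g)+4\pi gR(h-\bar h)V(\bar h)+\tfrac{gR}{2}\,\partial_{\bar h}V(\bar h)\Big]\Big|_{(u',R(u'))}\,\mathrm{d}u'.\nonumber
\end{align}
Inserting the estimates above, the three integrands are dominated, after extracting the factor $(1+R(u')+u')^{-(k-1)}\leq e^{Cx^2}(1+r+u)^{-(k-1)}$, by quantities of the form $C(x^3+x^p+x^{p+2})e^{Cx^2}(1+r+u)^{-(k-1)}\,\omega(u',R(u'))$ with $\int_0^u\omega(u',R(u'))\,\mathrm{d}u'$ finite; the convergence of this $u'$-integral is exactly where $p\geq k\geq 3$ enters, and the $\tfrac1R$-singularity of the first term is harmless because $(h-\bar h)(g-\tilde g)=O(r^3)$ near $r=0$ (equivalently, change variables $u'\mapsto R$). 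Treating the factor $h-\bar h$ in the middle term as linear in $h$ and applying Grönwall's inequality produces the additional exponential $\exp[Cx^{p+1}]$; collecting the remaining polynomial factors (the $d$ coming from $h(0,r_0)$) yields (\ref{decay1}).

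For (\ref{decay2}) I would differentiate (\ref{Dh}) in $r$. Since $[\partial_r,\mathcal{D}]=-\tfrac12(\partial_r\tilde g)\,\partial_r$, the quantity $q:=\partial_r h$ satisfies $\mathcal{D}q=\tfrac12(\partial_r\tilde g)\,q+\partial_r(\text{RHS of }(\ref{Dh}))$. The differentiated right-hand side is again a finite sum of terms controlled by the a priori bounds, now using $\partial_r(h-\bar h)=q-\tfrac1r(h-\bar h)$, $\partial_r g=\tfrac{4\pi}{r}g(h-\bar h)^2$ from (\ref{g2}), $\partial_r\tilde g=\tilde g\big[-\tfrac1r(1-e^{2G})+8\pi re^{2G}V(\bar h)\big]$ from (\ref{ij}) — with $e^{2G}=g/\tilde g$ bounded above and below by the first paragraph — and the $\partial^2_{\bar h}V$ estimate. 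Integrating along the same characteristics with the integrating factor $\exp\!\big(-\tfrac12\int_{u'}^u\partial_r\tilde g(v,R(v))\,\mathrm{d}v\big)$, which is bounded by an exponential of the form appearing in (\ref{decay2}) once one checks that $\int_0^u|\partial_r\tilde g(u',R(u'))|\,\mathrm{d}u'$ is finite, a Grönwall argument closes the estimate; using $|q(0,r_0)|\leq d(1+r_0)^{-k}\leq de^{Ckx^2}(1+r+u)^{-k}$ and collecting powers gives (\ref{decay2}), the larger exponents arising from the worst terms — those in which a factor $r$ (from $gr$) meets $\partial_r$ of a potential term, or the self-coupling $\tfrac12(\partial_r\tilde g)\,q$.

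The step I expect to be the real obstacle is not any single inequality but the weighted bookkeeping. First, because $\bar h$ does not inherit the decay of $h$, every occurrence of $V(\bar h)$, $\partial_{\bar h}V(\bar h)$, $g-\tilde g$ or the $\tfrac1r$-prefactor has to be handled with the refined two-parameter weights and with the outward growth $R(u')\gtrsim r+u-u'$ of the characteristic, which is what restores the full power $(1+r+u)^{-(k-1)}$ after the $u'$-integration. Second, in the $\partial_r h$ equation one must verify that $g$, $\tilde g$ and $e^{2G}$ stay uniformly (in an $x$-dependent way) away from $0$ and $\infty$ and that $\partial_r\tilde g$ is integrable along the characteristics — this is exactly what keeps the Grönwall constant, hence the exponential factors in (\ref{decay1}) and (\ref{decay2}), finite.
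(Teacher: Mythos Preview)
Your outline is correct and follows essentially the same route as the paper: integrate (\ref{Dh}) along the incoming characteristics, split the right-hand side into a part linear in the unknown (handled by an integrating factor, which is exactly your Gr\"onwall step and produces the exponential) and a source term bounded via the weighted two-parameter estimates on $\bar h$, $h-\bar h$, $g-\tilde g$, $V(\bar h)$; use a lower bound $\tilde g\geq\kappa(x)>0$ (the paper's $\kappa$ carries a $-Cx^{p+1}$ correction from the potential term in (\ref{g tilde}) that your $e^{-Cx^2}$ omits, so an implicit smallness of $x$ is needed) to push the characteristic outward and convert $(1+r_0)$-decay into $(1+r+u)$-decay; then differentiate in $r$ and repeat. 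The only cosmetic difference is that the paper phrases everything through the auxiliary map $h\mapsto\mathcal F(h)$ solving the frozen-coefficient linear equation (\ref{DF4 Prisma}), a device introduced here in anticipation of the contraction argument in Lemma~\ref{Lemma 2}; for the fixed point $h=\mathcal F(h)$ the two formulations coincide.
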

\begin{proof}
	We consider the mapping $h\mapsto \mathcal{F}(h)$ as the solution of the first-order linear partial differential equation
	\begin{align}\label{DF4 Prisma}
	\mathcal{D} \mathcal{F} = \frac{1}{2r}(\mathcal{F}-\bar{h})\left(g-\tilde{g}\right)+4\pi g r (\mathcal{F}-\bar{h})V(\bar{h})+\frac{gr}{2}\frac{\partial V(\bar{h})}{\partial\bar{h}},
	\end{align}
	with the initial condition
	\begin{align}\label{kondisi awal}
	\mathcal{F}(h)(0,r)=h(0,r).
	\end{align}
	
	Let $r(u)=\chi(u;r_0)$ be the solution of the ordinary differential equation
	\begin{align} \label{PDB}
	\frac{\mathrm{d}r}{\mathrm{d}u}=-\frac{1}{2}\tilde{g}(u,r),~~~r(0)=r_0.
	\end{align}
	We introduce the characteristic $r_1=\chi(u_1;r_0)$. Then, from (\ref{PDB}) we obtain
	\begin{align}\label{kondisi awal r}
	r_1 = r_0 - \frac{1}{2}\int_{0}^{u_1} \tilde{g}(u,\chi(u;r_0))~\mathrm{d}u.
	\end{align}
	We can represent (\ref{DF4 Prisma}) as an integral equation
	\begin{align}\label{Fcurl Prisma}
	\mathcal{F}(u_1,r_1)=&h(0,r_0)\exp\left\{\int_{0}^{u_1} \left[\frac{1}{2r}\left(g-\tilde{g}\right)+4\pi g  rV(\bar{h})\right]_\chi\mathrm{d}u\right\}\nonumber\\
	&+\int_{0}^{u_1} \exp\left\{\int_{u}^{u_1} \left[\frac{1}{2r}\left(g-\tilde{g}\right)+4\pi g rV(\bar{h})\right]_\chi\mathrm{d}u'\right\}[f]_\chi\mathrm{d}u
	\end{align}
	where we have denoted
	\begin{align}\label{f prisma}
	f=-\left[\frac{1}{2r}\left(g-\tilde{g}\right)+4\pi g  rV(\bar{h})\right]\bar{h}+\frac{gr}{2}\frac{\partial V(\bar{h})}{\partial\bar{h}}.
	\end{align}
	
	Let us denote $\|h\|_X=x$. From the mean value and the definition of the norm (\ref{norm x}) we have
	\begin{align}\label{hbar}
	|\bar{h}|\leq \frac{1}{r}\int_{0}^{r}|h(u,s)|\mathrm{d}s	\leq \frac{x}{(k-2)}\frac{r^{k-3}}{(1+u)^{k-2}(1+r+u)^{k-2}}.
	\end{align}
	Since $|\bar{h}|$ is always positive, we have $k\in[3,\infty)$. Then, we estimate
	\begin{align}
	|h(u,r)-h(u,r')|\leq\int_{r'}^{r}\left|\frac{\partial h}{\partial s}(u,s)\right|~\mathrm{d}s\leq 	\frac{x}{k-1}\left[(1+r'+u)^{-k+1}-(1+r+u)^{-k+1}\right].
	\end{align}
	From the above estimate, we obtain
	\begin{align}\label{h-h}
	|(h-\bar{h})(u,r)|\leq \frac{1}{r}\int_{0}^{r}|h(u,r)-h(u,r')|~\mathrm{d}r'\leq\frac{xr}{(k-1)(1+u)^{k-2}(1+r+u)^{k-1}},
	\end{align}
	and
	\begin{align}\label{int h-hbar}
	\int_{0}^{\infty}\frac{|h-\bar{h}|^2}{r}~\mathrm{d}r\leq\frac{x^2 }{(k-1)^2(12-14k+4k^2)(1+u)^{4(k-2)}}.
	\end{align}
	From (\ref{g2}), we get
	\begin{align}
	g(u,0)\geq \exp \left[-\frac{4\pi x^2 }{(k-1)^2(12-14k+4k^2)}\right],
	\end{align}
	and
	\begin{align}\label{g(u,0)}
	\bar{g}(u,0)\geq\exp \left[-\frac{4\pi x^2 }{(k-1)^2(12-14k+4k^2)}\right].
	\end{align}
	We obtain
	\begin{align}\label{g-gbar}
	|(g-\bar{g})(u,r)|\leq\frac{\pi x^2 r^2}{3k(k-1)^2(k-2)^2(1+u)^{-3+2k}(1+r+u)^3}.
	\end{align}
	Now, we estimate
	\begin{align}\label{estimate1}
	\int_{0}^{r}gs^2V(\bar{h})\mathrm{d}s\leq K_0\int_{0}^{r}gs^2|\bar{h}|^{p+1}\mathrm{d}s\leq\frac{K_0x^{p+1}r^3}{3(1+u)^{4(k-2)}(1+r+u)^3}.
	\end{align}
	Thus,
	\begin{align} \label{g-gtilde}
	\left|g-\tilde{g}\right|\leq|g-\bar{g}| + \frac{8\pi}{r}\int_{0}^{r}gs^2V(\bar{h})\mathrm{d}s\leq\frac{C(x^2+x^{p+1})r^2}{(1+u)^{-3+2k}(1+r+u)^3}.
	\end{align}
	Combining (\ref{estimasi Potensial}), (\ref{hbar}), and (\ref{g-gtilde}) we obtain (see Appendix \ref{Appendix1})
	\begin{align} \label{f3 Prisma}
	|f|\leq\frac{C(x^3+x^p+x^{p+2})}{(1+u)^{3(k-2)}(1+r+u)^2}.
	\end{align}
		
	Now, we estimate
	\begin{align}\label{g tilde2}
	\tilde{g}\geq\bar{g}(u,0)+\frac{8\pi}{r}\int_{0}^{r}s^2gV(\bar{h})\mathrm{d}s \geq\exp \left[-\frac{4\pi x^2 }{(k-1)^2(12-14k+4k^2)}\right]+\frac{8\pi K_0 x^{p+1}}{3}.
	\end{align}
	Let $x_1$ be the positive solution of the equation
	\begin{align} \label{x1}
	\exp \left[-\frac{4\pi x^2 }{(k-1)^2(12-14k+4k^2)}\right]+\frac{8\pi K_0 x^{p+1}}{3}=0.
	\end{align}
	We define the new function
	\begin{align}\label{k}
	\kappa=\kappa(x)=\exp \left[-\frac{4\pi x^2 }{(k-1)^2(12-14k+4k^2)}\right]+\frac{8\pi K_0 x^{p+1}}{3},
	\end{align}
	fulfills $0<\kappa\leq 1$ for all $x\in[0,x_1)$. Using the characteristic $r(u)=\chi(u;r_0)$, from (\ref{g tilde2}) and (\ref{k}) we obtain
	\begin{align}\label{ru}
	r(u)=r_1 + \frac{1}{2}\int_{u}^{u_1}\tilde{g}(u',r(u'))\mathrm{d}u'\geq r_1 +\frac{1}{2}\kappa(u_1-u),
	\end{align}
	and
	\begin{align}\label{ru2}
	1+r(u)+u\geq 1 + \frac{u}{2} + r_1 + \frac{\kappa}{2}(u_1 - u)\geq \kappa\left(1+\frac{u_1}{2}+r_1\right)\geq\frac{\kappa}{2}(1+r_1+u_1).
	\end{align}
	From (\ref{kondisi awal r}), we obtain
	\begin{align}
	r_0 = r_1 + \frac{1}{2}\int_{0}^{u_1} \tilde{g}(u,\chi(u;r_0))~\mathrm{d}u\geq r_1 +\frac{1}{2}\kappa u_1.
	\end{align}
	Let us denote $\|h(0,.)\|_{X_0}=d$. Thus,
	\begin{align}\label{h0r0}
	|h(0,r_0)|\leq \frac{\|h(0,)\|_{X_0}}{(1+r_0)^{k-1}}\leq  \frac{d}{(1+r_1+\frac{1}{2}\kappa u_1)^{k-1}}\leq \frac{2^{k-1}d}{\kappa^{k-1}(1+r_1+u_1)^{k-1}}.
	\end{align}
	
	Let us introduce the integral formula
	\begin{align}\label{integral}
	\int_{0}^{u_1}\left[\frac{r^s}{(1+u)^t(1+r+u)^q}\right]_\chi\mathrm{d}u\leq& \int_{0}^{u_1}\left[\frac{1}{(1+u)^t(1+r+u)^{q-s}}\right]_\chi\mathrm{d}u\nonumber\\
	\leq& \frac{1}{\kappa^m(1+r_1+u_1)^m}\int_{0}^{\infty}\frac{\mathrm{d}u}{(1+u)^{q-s+t-m}}\nonumber\\
	=&\frac{2^m}{(q-s+t-m-1)\kappa^m(1+r_1+u_1)^m},
	\end{align}
	with $q-s+t-m>1$, and $q,s,t,m\in\mathbb{R}$. Hence, we estimate
	\begin{align}\label{int g2}
	\int_{0}^{u_1} \left[\left|\frac{1}{2r}\left(g-\tilde{g}\right)\right|+4\pi g r|V(\bar{h})|\right]_\chi~\mathrm{d}u\leq C(x^2+x^{p+1}).
	\end{align}
	
	Combining (\ref{f3 Prisma}), (\ref{h0r0}), and (\ref{int g2}), we write the estimate of (\ref{Fcurl Prisma}) as follows:
	\begin{align}
	\left|\mathcal{F}(u_1,r_1)\right|\leq&\left|h(0,r_0)\right|\exp\left\{\int_{0}^{u_1} \left[\left|\frac{1}{2r}\left(g-\tilde{g}\right)\right|+4\pi g r|V(\bar{h})|\right]_\chi~\mathrm{d}u\right\}\nonumber\\
	&+\int_{0}^{u_1} \exp\left\{\int_{u}^{u_1} \left[\left|\frac{1}{2r}\left(g-\tilde{g}\right)\right|+4\pi g r|V(\bar{h})|\right]_\chi~\mathrm{d}u'\right\}[f]_\chi~ \mathrm{d}u\nonumber\\
	\leq&\frac{2^{k-1}d}{\kappa^{k-1}(1+r_1+u_1)^{k-1}}\exp \left[C(x^2+x^{p+1})\right]\nonumber\\
	&+\int_{0}^{u_1}\exp\left[C(x^2+x^{p+1})\right]\frac{C(x^3+x^p+x^{p+2})}{(1+u)^{3(k-2)}(1+r+u)^2}\mathrm{d}u.
	\end{align}
	Using the integral formulation (\ref{integral}), we obtain
	\begin{align}\label{F Prisma}
	|\mathcal{F}(u_1,r_1)|\leq \frac{C(d+x^3+x^p+x^{p+2})\exp\left[C(x^2+x^{p+1})\right]}{\kappa^{k-1}(1+r_1+u_1)^{k-1}}.
	\end{align}
	We obtain the decay estimate (\ref{decay1}).
	
	Now, let us define
	\begin{align}
	\mathcal{G}(u,r)=\frac{\partial \mathcal{F}}{\partial r}(u,r),
	\end{align}
	with the initial condition
	\begin{align}
	\mathcal{G}(0,r_0)=\frac{\partial h}{\partial r}(0,r_0).
	\end{align}
	Differentiating (\ref{Fcurl Prisma}) with respect to $r$ we obtain 
	\begin{align}\label{linear G}
	\mathcal{D}\mathcal{G}=&\left[\frac{1}{2}\frac{\partial \tilde{g}}{\partial r}+\frac{1}{2r} \left(g-\tilde{g}\right)+4\pi g rV(\bar{h})\right]\mathcal{G}\nonumber\\
	& + \left[\frac{1}{2r}\frac{\partial}{\partial r}\left(g-\tilde{g}\right)-\frac{1}{2r^2}\left(g-\tilde{g}\right)+\frac{\partial g}{\partial r}4\pi rV(\bar{h})+4\pi g V(\bar{h})+4\pi g r \frac{\partial V(\bar{h})}{\partial\bar{h}} \frac{\partial\bar{h}}{\partial r}\right]\left(\mathcal{F}-\bar{h}\right)\nonumber\\
	& -\left[\frac{1}{2r} \left(g-\tilde{g}\right)+4\pi g rV(\bar{h}) -\frac{gr}{2}\frac{\partial^2 V(\bar{h})}{\partial\bar{h}^2}\right]\frac{\partial \bar{h}}{\partial r}+\left[\frac{\partial g}{\partial r}\frac{r}{2}+\frac{g}{2}\right]\frac{\partial V(\bar{h})}{\partial\bar{h}}
	\end{align}
	Using the characteristics as previously, we can solve the equation (\ref{linear G}) for $\mathcal{G}(u_1,r_1)$ as follows:
	\begin{align}\label{G1r1}
	\mathcal{G}(u_1,r_1)=&\frac{\partial h}{\partial r}(0,r_0)\exp\left\{\int_{0}^{u_1}\left[\frac{1}{2}\frac{\partial \tilde{g}}{\partial r}+\frac{1}{2r} \left(g-\tilde{g}\right)+4\pi g r V(\bar{h})\right]_\chi\mathrm{d}u \right\}\nonumber\\
	&+\int_{0}^{u_1}\exp\left\{\int_{u}^{u_1}\left[\frac{1}{2}\frac{\partial \tilde{g}}{\partial r}+\frac{1}{2r} \left(g-\tilde{g}\right)+4\pi g r V(\bar{h})\right]_\chi\mathrm{d}u' \right\}[f_1]_\chi\mathrm{d}u,
	\end{align}
	where we have denoted
	\begin{align} \label{f1 Prisma}
f_1 =& \left[\frac{1}{2r}\frac{\partial}{\partial r}\left(g-\tilde{g}\right)-\frac{1}{2r^2}\left(g-\tilde{g}\right)+\frac{\partial g}{\partial r}4\pi rV(\bar{h})+4\pi gV(\bar{h})+4\pi g r \frac{\partial V(\bar{h})}{\partial\bar{h}}\frac{\partial\bar{h}}{\partial r}\right]\left(\mathcal{F}-\bar{h}\right)\nonumber\\
& -\left[\frac{1}{2r} \left(g-\tilde{g}\right)+4\pi g rV(\bar{h}) -\frac{gr}{2}\frac{\partial^2 V(\bar{h})}{\partial\bar{h}^2}\right]\frac{\partial \bar{h}}{\partial r}+\left[\frac{\partial g}{\partial r}\frac{r}{2}+\frac{g}{2}\right]\frac{\partial V(\bar{h})}{\partial\bar{h}}.
	\end{align}
	
	Differentiating (\ref{g tilde}) with respect to $r$, we get
	\begin{align}
	\frac{\partial\tilde{g}}{\partial r}=\frac{\partial\bar{g}}{\partial r}-\frac{8\pi}{r^2}\int_{0}^{r}gs^2V(\bar{h})\mathrm{d}s+8\pi grV(\bar{h}).
	\end{align}
	We define
	$\frac{\partial\bar{g}}{\partial r}=\frac{g-\bar{g}}{r}$, such that
	\begin{align}
	\left|\frac{\partial\tilde{g}}{\partial r}\right|\leq \frac{|g-\bar{g}|}{r}+\frac{8\pi}{r^2}\int_{0}^{r}gs^2|V(\bar{h})|\mathrm{d}s+8\pi gr|V(\bar{h})|.
	\end{align}
	From (\ref{hbar}), (\ref{g-gbar}), and (\ref{estimate1}), we obtain
	\begin{align}\label{gtilde}
	\left|\frac{\partial\tilde{g}}{\partial r}\right|\leq \frac{C(x^2+x^{k+1}+x^{p+1})r}{(1+u)^{-3+2k}(1+r+u)^3}.
	\end{align}
	In view of (\ref{g tilde}) and (\ref{h-h}), the above yields
	\begin{align}\label{gr}
	\left|\frac{\partial g}{\partial r}\right|\leq \frac{4\pi x^2r }{(k-1)^2(1+u)^{2(k-2)}(1+r+u)^{2(k-1)}}.
	\end{align}
	Combining (\ref{estimasi Potensial}), (\ref{hbar}), (\ref{gtilde}), and (\ref{gr}), we have (see Appendix \ref{Appendix2})
	\begin{align}\label{f1 koreksi}
	|f_1|\leq& \frac{C(x^2+x^{k+1}+x^{p+1}+x^{p+3})}{(1+u)^{2(k-2)}(1+r+u)^3}|\mathcal{F}|+\frac{C(x^3+x^{k+2}+x^{p+2}+x^{p+4})}{(1+u)^{3(k-2)}(1+r+u)^4}\nonumber\\
	&+\frac{C(x^3+x^{p}+x^{p+2})r}{(1+u)^{3(k-2)}(1+r+u)^{k+1}}.
	\end{align}
	
	Then, we obtain the estimate
	\begin{align}\label{h0r02}
	\left|\frac{\partial h}{\partial r}(0,r_0) \right|\leq \frac{\|h_0\|_{X_0}}{(1+r_0)^k}\leq\frac{d}{(1+r_1+\frac{1}{2}\kappa u_1)^k}\leq \frac{2^kd}{\kappa^k(1+r_1+u_1)^k}.
	\end{align}
	Using the integral formulation (\ref{integral}), we obtain
	\begin{align}\label{exp Prisma}
	\int_{0}^{u_1}\left[\frac{1}{2}\left|\frac{\partial\tilde{g}}{\partial r}\right|+\frac{1}{2r}|g-\tilde{g}|+4\pi gr|V(\bar{h})|\right]_\chi\mathrm{d}u\leq C\left(x^2 + x^4 + x^{p+1} \right).
	\end{align}
	
	In view of (\ref{f1 koreksi}), (\ref{h0r02}), and (\ref{exp Prisma}), we can represent the estimate (\ref{G1r1}) as follows:
	\begin{align}
	\mathcal{G}(u_1,r_1)\leq&\left|\frac{\partial h}{\partial r}(0,r_0)\right|\exp\left\{\int_{0}^{u_1}\left[\frac{1}{2}\left|\frac{\partial\tilde{g}}{\partial r}\right|+\frac{1}{2r}|g-\tilde{g}|+4\pi gr|V(\bar{h})|\right]_\chi\mathrm{d}u\right\}\nonumber\\
	&+\int_{0}^{u_1}\exp\left\{\int_{u}^{u_1}\left[\frac{1}{2}\left|\frac{\partial\tilde{g}}{\partial r}\right|+\frac{1}{2r}|g-\tilde{g}|+4\pi gr|V(\bar{h})|\right]_\chi\mathrm{d}u' \right\}\left|[f_1]_\chi\right|\mathrm{d}u\nonumber\\
	\leq&\frac{2^kd}{\kappa^k(1+r_1+u_1)^k}\exp\left[C(x^2+x^4+x^{p+1})\right]+\int_{0}^{u_1}\exp\left[C(x^2+x^4+x^{p+1})\right]\nonumber\\
	&\times \left[\frac{C(x^2+x^{k+1}+x^{p+1}+x^{p+3})}{(1+u)^{2(k-2)}(1+r+u)^3}|\mathcal{F}|+\frac{C(x^3+x^{k+2}+x^{p+2}+x^{p+4})}{(1+u)^{3(k-2)}(1+r+u)^4}\right.\nonumber\\
	&\left.+\frac{C(x^3+x^{p}+x^{p+2})r}{(1+u)^{3(k-2)}(1+r+u)^{k+1}}\right]\mathrm{d}u.
	\end{align}
	Using the integral formulation (\ref{integral}) and the estimate (\ref{F Prisma}), we have
	\begin{align}\label{G Prisma}
	\left|\mathcal{G}(u_1,r_1)\right|\leq&\frac{C(d+x^3+x^{k+2}+x^p+x^{p+2}+x^{p+4})(1+x^2+x^{k+1}+x^{p+1}+x^{p+3})}{\kappa^k(1+r_1+u_1)^k}\nonumber\\
	\leq&\exp\left[C(x^2+x^4+x^{p+1})\right].
	\end{align}
	We obtain the decay estimate (\ref{decay2}).
	This is the end of the proof.
\end{proof}
\section{Contraction Mapping}
We define the function space $Y$ containing $X$ as follows
\begin{align} \label{space Y}
Y=\{h(.,.)\in C^1([0,\infty))\times[0,\infty))~|~h(0,r)=h_0(r),~\|h\|_Y<\infty\},
\end{align}
with the norm
\begin{align}\label{norm y}
\|h\|_Y = \sup_{u\geq 0}\sup_{r\geq 0}\left\{(1+r+u)^{k-1}|h(u,r)| \right\}.
\end{align}

\begin{lemma}\label{Lemma 2}
	Let us denote $\|h_1-h_2\|_Y=y$. Setting $k\in[3,\infty)$ and $p\in[k,\infty)$. We introduce $B(0,x)=\left\{f\in X| \|f\|_X\leq x \right\}$ as the close ball with radius $x$ in $X$. For a suitable $x$ and $d=d(x)$, the mapping $\mathcal{F}(.)$ satisfies the arguments as follows:
	\begin{enumerate}
		\item $\mathcal{F}: B(0,x)\rightarrow B(0,x)$ if $d<\tilde{\Lambda}_1(x)$, where
		\begin{eqnarray}
		\tilde{\Lambda}_1(x)=\frac{x\kappa^k\exp\left[-C_2(x^2+x^{k+1}+x^{p+1})\right]}{C_1(1+x^2+x^{k+1}+x^{p+1}+x^{p+3})}-\left(x^3+x^{k+2}+x^p+x^{p+2}+x^{p+4}\right),
		\end{eqnarray}
		\item there exists $\tilde{\Lambda}_2=\tilde{\Lambda}_2(x)\in[0,1)$, such that
		\begin{equation}
		\|\mathcal{F}(h_1)-\mathcal{F}(h_2)\|_Y \leq \tilde{\Lambda}_2\|h_1-h_2\|_Y.
		\end{equation}
	\end{enumerate}
\end{lemma}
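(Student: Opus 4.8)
The plan is to read off the self-mapping property (1) directly from Lemma~\ref{Lemma 1}, and to prove the contraction property (2) by rerunning the characteristic/Gronwall scheme of Lemma~\ref{Lemma 1} on the \emph{difference} $\mathcal{F}(h_1)-\mathcal{F}(h_2)$, taking advantage of the fact that in $Y$ the initial data is fixed and no $r$-derivative is measured.

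For part (1), take $h\in B(0,x)$, i.e. $\|h\|_X\le x$. Substituting $x$ into the estimates (\ref{F Prisma}) and (\ref{G Prisma}) and using the definition (\ref{norm x}) of $\|\cdot\|_X$ gives a bound of the form
\begin{equation}
\|\mathcal{F}(h)\|_X\le \frac{C_1\left(d+x^3+x^{k+2}+x^p+x^{p+2}+x^{p+4}\right)\left(1+x^2+x^{k+1}+x^{p+1}+x^{p+3}\right)}{\kappa^k}\,\exp\left[C_2\left(x^2+x^{k+1}+x^{p+1}\right)\right].
\end{equation}
Imposing that the right-hand side be $\le x$ and solving the resulting inequality for $d$ produces exactly the threshold $d\le\tilde{\Lambda}_1(x)$. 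For $x$ small enough $\tilde{\Lambda}_1(x)>0$, since every subtracted term is a higher power of $x$ and the prefactor stays finite as $x\to0$ (recall $\kappa(0)=1$); one then chooses $d=d(x)<\tilde{\Lambda}_1(x)$, and the regularity $\mathcal{F}(h)\in C^1$ is already built into the construction of $\mathcal{F}$ through the linear equation (\ref{DF4 Prisma}) solved by characteristics. Hence $\mathcal{F}\colon B(0,x)\to B(0,x)$.

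For part (2), put $\mathcal{F}_i:=\mathcal{F}(h_i)$ and let $\chi_i$ be the characteristics (\ref{PDB}) attached to $\tilde g_i:=\tilde g[h_i]$. Because $h_1(0,\cdot)=h_2(0,\cdot)=h_0$, the initial-data term in the representation (\ref{Fcurl Prisma}) is identical for $\mathcal{F}_1$ and $\mathcal{F}_2$ and drops out of the difference; what is left is to estimate (a) the difference of the exponential amplitude factors, (b) the difference of the source terms $[f]_{\chi_1}-[f]_{\chi_2}$ with $f$ as in (\ref{f prisma}), and (c) the error from integrating along two distinct characteristics. The inputs are Lipschitz-in-$h$ bounds, each of the schematic form ``constant times a power of $x$ times $\|h_1-h_2\|_Y$ times the appropriate decay weight'': $|\bar h_1-\bar h_2|$ and $|(h_1-\bar h_1)-(h_2-\bar h_2)|$ directly from (\ref{norm y}); $|g_1-g_2|$, $|\bar g_1-\bar g_2|$, $|\tilde g_1-\tilde g_2|$ from (\ref{g2}), (\ref{g tilde}) combined with $|e^a-e^b|\le e^{\max(a,b)}|a-b|$ applied to the exponent in (\ref{g2}) and with (\ref{int h-hbar})-type bounds on the exponent difference; $|V(\bar h_1)-V(\bar h_2)|$ and $|\partial V(\bar h_1)/\partial\bar h-\partial V(\bar h_2)/\partial\bar h|$ from the mean value theorem and the potential estimate (\ref{estimasi Potensial}); and $|\chi_1(u;r_0)-\chi_2(u;r_0)|$ from Gronwall applied to (\ref{PDB}) via $|\tilde g_1-\tilde g_2|$. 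Feeding these in, the integrand in the difference of (\ref{Fcurl Prisma}) is bounded by $C\,(\text{polynomial in }x,d\text{ with no constant term})\,\|h_1-h_2\|_Y\,(1+u)^{-\alpha}(1+r+u)^{-(k-1)}$ for some $\alpha>1$; pulling the amplitude out front as in Lemma~\ref{Lemma 1} and applying the integral formula (\ref{integral}) together with $1+r(u)+u\ge\tfrac{\kappa}{2}(1+r_1+u_1)$ gives
\begin{equation}
(1+r_1+u_1)^{k-1}\left|\mathcal{F}_1(u_1,r_1)-\mathcal{F}_2(u_1,r_1)\right|\le \tilde{\Lambda}_2(x)\,\|h_1-h_2\|_Y,
\end{equation}
with $\tilde{\Lambda}_2(x)$ a sum of positive powers of $x$ (and of $d=d(x)$) times $\kappa^{-(k-1)}\exp[C(x^2+x^{k+1}+x^{p+1})]$. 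Since $\tilde{\Lambda}_2(x)\to0$ as $x\to0$, shrinking $x$ (and keeping $d(x)<\tilde{\Lambda}_1(x)$) forces $\tilde{\Lambda}_2(x)\in[0,1)$, which is (2).

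The main obstacle is item (c): the characteristics of $\mathcal{F}_1$ and $\mathcal{F}_2$ differ, so even the terms that look linear in $\mathcal{F}$ in (\ref{DF4 Prisma}) — such as $\tfrac{1}{2r}(g-\tilde g)\mathcal{F}$ — must be compared along \emph{different} curves. Bounding $[\Phi]_{\chi_1}-[\Phi]_{\chi_2}$ for a coefficient $\Phi$ requires both a Lipschitz estimate of $\Phi$ in $r$ — which costs one $r$-derivative, hence one extra power of the weight, and is precisely why the arguments $h_i$ must be controlled in $\|\cdot\|_X$ and not merely in $\|\cdot\|_Y$ — and the Gronwall bound on $|\chi_1-\chi_2|$, which itself feeds back on $\|\mathcal{F}_1-\mathcal{F}_2\|$ through $\tilde g_i$, so a small-$x$ absorption argument is needed to close the loop. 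Keeping the powers of $(1+u)$ and $(1+r+u)$ matched so that the $u$-integral in (\ref{integral}) both converges and returns the weight $(1+r_1+u_1)^{-(k-1)}$ with no loss is the delicate bookkeeping — entirely parallel to the estimates for $f$ and $f_1$ in Appendices~\ref{Appendix1}--\ref{Appendix2}, but with one extra factor of $\|h_1-h_2\|_Y$ throughout.
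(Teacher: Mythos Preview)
Your part (1) is exactly the paper's argument.

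Your part (2), however, takes a harder route than the paper and contains an error. You propose to compare the two Duhamel representations (\ref{Fcurl Prisma}) for $\mathcal{F}_1$ and $\mathcal{F}_2$ along their \emph{own} characteristics $\chi_1,\chi_2$ and then control the drift $\chi_1-\chi_2$. But your claim that ``the initial-data term \dots\ is identical for $\mathcal{F}_1$ and $\mathcal{F}_2$ and drops out of the difference'' is wrong: the two characteristics through $(u_1,r_1)$ hit $u=0$ at \emph{different} points $r_0^{(1)}\neq r_0^{(2)}$, so the boundary contributions are $h_0(r_0^{(1)})$ and $h_0(r_0^{(2)})$, which do not cancel. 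This is repairable (bound the difference by $\|h_0\|_{X_0}\,|r_0^{(1)}-r_0^{(2)}|$ and feed in the Gronwall estimate for $|\chi_1-\chi_2|$), but it illustrates the cost of your scheme: every term acquires an extra characteristic-drift piece, and you must supply $r$-Lipschitz control of all coefficients along the way. (Your feared feedback loop, incidentally, does not exist: $\tilde g_i$ depends on $h_i$, not on $\mathcal{F}_i$, so there is no circularity to close.)

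The paper avoids all of this. Rather than comparing two integral formulas, it subtracts the two evolution equations and writes a \emph{single} linear transport equation for $\Theta:=\mathcal{F}_1-\mathcal{F}_2$ along the characteristic $\chi_1$ of $h_1$ alone:
\[
\partial_u\Theta-\tfrac{\tilde g_1}{2}\,\partial_r\Theta
=\Bigl[\tfrac{1}{2r}(g_1-\tilde g_1)+4\pi r g_2 V(\bar h_1)\Bigr]\Theta+\tilde\varphi,
\]
where the entire ``different-characteristic'' effect collapses into the single additive source term $\tfrac12(\tilde g_1-\tilde g_2)\,\mathcal{G}_2$ inside $\tilde\varphi$ (this is where the $\|\cdot\|_X$-control on $\mathcal{G}_2=\partial_r\mathcal{F}_2$ from Lemma~\ref{Lemma 1} is used). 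Now the initial condition is genuinely $\Theta(0,r_0)=0$, the Duhamel representation (\ref{Theta Prisma}) has no boundary term, and one only needs zeroth-order difference bounds $|g_1-g_2|$, $|\tilde g_1-\tilde g_2|$, $|\bar h_1-\bar h_2|$, $|V(\bar h_1)-V(\bar h_2)|$, etc., each linear in $y=\|h_1-h_2\|_Y$. The term-by-term bookkeeping (Appendix~\ref{Appendix3}) and the integral formula (\ref{integral}) then give (\ref{estimate Theta}) and hence the contraction constant (\ref{Lipschitz constant}) directly, with none of the obstacles you flag.
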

\begin{proof}
From the definition of norm (\ref{norm x}), combining with (\ref{F Prisma}) and (\ref{G Prisma}), we have
\begin{align}\label{Estimate F}
\|\mathcal{F}\|_X\leq&\frac{C_1}{\kappa^k}\left(d+x^3+x^{k+2}+x^p+x^{p+2}+x^{p+4}\right)\left(1+x^2+x^{k+1}+x^{p+1}+x^{p+3}\right)\nonumber\\
&\times\exp\left[C_2(x^2+x^4+x^{p+1})\right].
\end{align}
Let us define
\begin{align}
\tilde{\Lambda}_1(x)=\frac{x\kappa^{k}\exp\left[-C_2(x^2+x^4+x^{p+1})\right]}{C_1(1+x^2+x^{k+1}+x^{p+1}+x^{p+3})}-\left(x^3+x^{k+2}+x^p+x^{p+2}+x^{p+4}\right).
\end{align}
We find that $\tilde{\Lambda}_1(0)=0, \tilde{\Lambda}_1'(0)>0,$ and $\tilde{\Lambda}_1(x)\rightarrow-\infty$ as $x\rightarrow\infty$. Thus, there exists $x_0\in(0,x_1)$ such that $\tilde{\Lambda}_1(x)$ is monotonically increasing on $[0,x_0]$. For every $x\in(0,x_0)$ we conclude that $\|\mathcal{F}\|_X\leq x$, and
\begin{align}
\mathcal{F}:B(0,x)\rightarrow B(0,x)~~~\mathrm{if}~~d<\tilde{\Lambda}_1(x).
\end{align}
This proves the first argument.

Now, we have the second argument. Setting $\Theta=\mathcal{F}(h_1)-\mathcal{F}(h_2)$, and the notation $g_l=g(h_l)$, $\mathcal{F}_l=\mathcal{F}(h_l)$, and $\mathcal{G}_l=\mathcal{G}(h_l)$, for $l=1,2$. We assume
\begin{align}
\max\left\{\|h_1\|_X,\|h_2\|_X\right\}<x.
\end{align}
Let us consider the equation (\ref{Dh}) for $h_1$ and $h_2$ in $X$ with $h_1(0,r_1)=h_2(0,r_2)$. Then, taking the differences between them such that we have
\begin{align}\label{Lipschitz Prisma}
&\frac{\partial\Theta}{\partial u}-\frac{\tilde{g}_1}{2}\frac{\partial\Theta}{\partial r}=\frac{1}{2}\left(\tilde{g}_1-\tilde{g}_2\right)\mathcal{G}_2+\frac{1}{2r}\left(g_1-\tilde{g}_1\right)(\Theta - \bar{h}_1 +\bar{h}_2)\nonumber\\
&+\frac{1}{2r}\left(g_1-\tilde{g}_1-g_2+\tilde{g}_2\right)(\mathcal{F}_2-\bar{h}_2)+4\pi r(g_1-g_2) \mathcal{F}_1V(\bar{h}_1)-4\pi r(g_1-g_2)\bar{h}_1 V(\bar{h}_1)\nonumber\\
&+4\pi r g_2\left(V(\bar{h}_1)-V(\bar{h}_2)\right)\mathcal{F}_2+4\pi r g_2\Theta V(\bar{h}_1)-4\pi r g_2(\bar{h}_1 - \bar{h}_2)V(\bar{h}_2)\nonumber\\
&-4\pi r g_2\bar{h}_1 \left(V(\bar{h}_1)-V(\bar{h}_2)\right) +\frac{r}{2}(g_1-g_2)\bar{h}_1\frac{\partial^2 V(\bar{h}_1)}{\partial \bar{h}_1} +\frac{r}{2}g_2(\bar{h}_1-\bar{h}_2)\frac{\partial^2 V(\bar{h}_1)}{\partial \bar{h}_1}\nonumber\\
&+\frac{r}{2}g_2\bar{h}_1\left(\frac{\partial^2 V(\bar{h}_1)}{\partial \bar{h}_1}-\frac{\partial^2 V(\bar{h}_2)}{\partial \bar{h}_2}\right).
\end{align}
As previously, we use the characteristic $\chi_l=\chi_l(u,r)$ defined by
\begin{align}
\frac{\mathrm{d}r}{\mathrm{d}u}=-\frac{\tilde{g}_1}{2}(\chi_1(u,r),u);~~~r(0)=r_0.
\end{align}
Thus, we write down the solution of (\ref{Lipschitz Prisma}) for $\Theta(u_1,r_1)$ as follows:
\begin{align}\label{Theta Prisma}
\Theta(u_1,r_1)=
\int_{0}^{u_1} \exp\left\{\int_{u}^{u_1} \left[\frac{1}{2r}\left(g_1-\tilde{g}_1\right)+4\pi r g_2V(\bar{h}_1)\right]_{\chi_1}\mathrm{d}u'\right\}[\tilde{\varphi}]_{\chi_1}\mathrm{d}u,
\end{align}
where we have denoted
\begin{align} \label{phi tilde}
\tilde{\varphi}=&\frac{1}{2}\left(\tilde{g}_1-\tilde{g}_2\right)\mathcal{G}_2-\frac{1}{2r}\left(g_1-\tilde{g}_1\right)( \bar{h}_1 -\bar{h}_2)+ \frac{1}{2r}\left(g_1-\tilde{g}_1+g_2-\tilde{g}_2\right)(\mathcal{F}_2-\bar{h}_2)\nonumber\\
&+4\pi r(g_1-g_2) \mathcal{F}_1V(\bar{h}_1)-4\pi r(g_1-g_2)\bar{h}_1 V(\bar{h}_1)+4\pi r g_2\left(V(\bar{h}_1)-V(\bar{h}_2)\right)\mathcal{F}_2\nonumber\\
&- 4\pi r g_2(\bar{h}_1 - \bar{h}_2)V(\bar{h}_2)-4\pi r g_2\bar{h}_1\left(V(\bar{h}_1)-V(\bar{h}_2)\right)\nonumber\\
&+\frac{r}{2}(g_1-g_2)\bar{h}_1\frac{\partial^2 V(\bar{h}_1)}{\partial \bar{h}_1^2} +\frac{r}{2}g_2(\bar{h}_1-\bar{h}_2)\frac{\partial^2 V(\bar{h}_1)}{\partial \bar{h}_1^2}+\frac{r}{2}g_2\bar{h}_1\left(\frac{\partial^2 V(\bar{h}_1)}{\partial \bar{h}_1^2}-\frac{\partial^2 V(\bar{h}_2)}{\partial \bar{h}_2^2}\right).
\end{align}

Let us denote $\|h_1-h_2\|_Y=y$. From the definition of mean value, we have
\begin{align}\label{hbar1-hbar2}
|\bar{h}_1 - \bar{h}_2|\leq \frac{1}{r}\int_{0}^{r}|h_1-h_2|\mathrm{d}s\leq\frac{y}{(k-2)}\frac{r^{k-3}}{(1+u)^{k-2}(1+r+u)^{k-2}}.
\end{align}
Thus,
\begin{align}\label{h1-h2}
|h_1-h_2-(\bar{h}_1-\bar{h}_2)|\leq\frac{Cy}{(1+u)^{k-2}(1+r+u)}.
\end{align}
Using (\ref{h-h}) and (\ref{h1-h2}), we obtain
\begin{align}\label{h-h square}
\left||h_1-\bar{h}_1|^2-|h_2-\bar{h}_2|^2\right|\leq \frac{Cxyr}{(1+u)^{2(k-2)}(1+r+u)^k}.
\end{align}
Then, we estimate
\begin{align}
|g_1-g_2|\leq4\pi \int_{r}^{\infty}\frac{1}{s}\left||h_1-\bar{h}_1|^2-|h_2-\bar{h}_2|^2\right|\mathrm{d}s\leq  \frac{Cxy}{(1+u)^{2(k-2)}(1+r+u)^{k-1}}.
\end{align}
From the above estimate, we obtain
\begin{align}\label{D1 1}
|\bar{g}_1-\bar{g}_2|\leq\frac{1}{r}\int_{0}^{r}|g_1-g_2|\mathrm{d}s\leq\frac{Cxy}{(1+u)^{3(k-2)}(1+r+u)}.
\end{align}
Now, we define
\begin{align}
\bar{h}_1^{p+1} - \bar{h}_2^{p+1}=(\bar{h}_1-\bar{h}_2)\int_{0}^{1}\left(t\bar{h}_1+(1-t)\bar{h}_2\right)\left|t\bar{h}_1+(1-t)\bar{h}_2\right|^{p-1}\mathrm{d}t,
\end{align}
such that
\begin{align}\label{hp1-hp2}
\left||\bar{h}_1|^{p+1}-|\bar{h}_2|^{p+1}\right|\leq\frac{2^px^py}{(k-2)^2(1+u)^{(k-2)(p+1)}(1+r+u)^{p+1}}.
\end{align}
From the above estimate, we obtain
\begin{align}\label{D1 2}
\left|\frac{8\pi}{r}\int_{0}^{r}gs^2\left(V(\bar{h}_1)-V(\bar{h}_2)\right)\mathrm{d}s\right|\leq& \frac{8\pi}{r}\int_{0}^{r}s^2\frac{K_02^px^py}{(k-2)^2(1+u)^{(k-2)(p+1)}(1+r+u)^{p+1}}\mathrm{d}s\nonumber\\
=& \frac{K_02^{p+3}\pi x^p y r^{k-2}}{(k-2)^2(1+u)^{(k-2)(p+2)}(1+r+u)^k}.
\end{align}
Combining (\ref{D1 1}) and (\ref{D1 2}), we have
\begin{align}\label{gtilde1-gtilde2}
|\tilde{g}_1-\tilde{g}_2|\leq&|\bar{g}_1-\bar{g}_2|+\left|\frac{8\pi}{r}\int_{0}^{r}gs^2\left(V(\bar{h}_1)-V(\bar{h}_2)\right)\mathrm{d}s\right|\nonumber\\
\leq&\frac{Cxy}{(1+u)^{3(k-2)}(1+r+u)}+\frac{K_02^{p+3}\pi x^p y r^{k-2}}{(k-2)^2(1+u)^{(k-2)(p+2)}(1+r+u)^k}\nonumber\\
\leq&\frac{Cy(x+x^p)}{(1+u)^{3(k-2)}(1+r+u)}.
\end{align}
From (\ref{h-h square}), we obtain
\begin{align}
\left|g_1-g_2-(\bar{g}_1-\bar{g}_2) \right|\leq&\frac{1}{r}\int_{0}^{r}\int_{r'}^{r}\left|\frac{\partial}{\partial r}(g_1-g_2)\right|\mathrm{d}s\mathrm{d}r'\nonumber\\
\leq&\frac{4\pi}{r}\int_{0}^{r}\int_{r'}^{r}\frac{1}{s}\left||h_1-\bar{h}_1|^2-|h_2-\bar{h}_2|^2\right|\mathrm{d}s\mathrm{d}r'\nonumber\\
\leq& \frac{Cxyr}{(1+u)^{2k-3}(1+r+u)^{k-1}}.
\end{align}
Thus,
\begin{align}\label{g1-g2}
\frac{1}{2r}\left|g_1-\tilde{g}_1-(g_2-\tilde{g}_2)\right|\leq&\frac{1}{2r}\left[\left|g_1-g_2-(\bar{g}_1-\bar{g}_2)\right|+\left|\frac{8\pi}{r}\int_{0}^{r}gs^2\left(V(\bar{h}_1)-V(\bar{h}_2)\right)\mathrm{d}s\right|\right]\nonumber\\
\leq& \frac{1}{2r}\left[ \frac{Cxyr}{(1+u)^{2k-3}(1+r+u)^{k-1}}+\frac{K_02^{p+3}\pi x^p y r^{k-2}}{(k-2)^2(1+u)^{(k-2)(p+2)}(1+r+u)^k}\right]\nonumber\\
\leq&\frac{C(x+x^p)y}{(1+u)^{2k-3}(1+r+u)^{k-1}}.
\end{align}
Combining (\ref{estimasi Potensial}), (\ref{F Prisma}), (\ref{G Prisma}), (\ref{hbar1-hbar2}), (\ref{gtilde1-gtilde2}), and (\ref{g1-g2}) we obtain (see Appendix \ref{Appendix3})
\begin{align}\label{varphi tilde}
|\tilde{\varphi}|\leq\frac{Cy\left[\alpha(x)+\beta(x)+\gamma(x)+\sigma(x)+x^2+x^{p-1}+x^{p+1}+x^{p+3}\right]}{(1+u)^{3(k-2)}(1+r+u)^2},
\end{align}
with
\begin{align}
\alpha(x)=&C(x+x^p)\left(d+x^3+x^{k+2}+x^p+x^{p+2}+x^{p+4}\right)(1+x^2+x^{k+1}+x^{p+1}+x^{p+3})\nonumber\\
&\times\exp\left[C(x^2+x^4+x^{p+1})\right]\label{alpha}\\
\beta(x)=&C(x+x^p)(d+x^3+x^p+x^{p+2})\exp\left[C(x^2+x^{p+1})\right]\label{beta}\\
\gamma(x)=&C x^p(d+x^3+x^p+x^{p+2})\exp[C(x^2+x^{p+1})]\label{gamma}\\
\sigma(x)=&Cx^{p+2}(d+x^3+x^p+x^{p+2})\exp[C(x^2+x^{p+1})]\label{sigma}.
\end{align}

We use the integral formulation such that
\begin{align}
\int_{u}^{u_1}\frac{1}{2r}\left|g_1-\tilde{g}_1\right|+4\pi r g_2|V(\bar{h}_1)|\mathrm{d}u\leq C(x^2+x^{p+1}).
\end{align}
Combining (\ref{int g2}) and (\ref{varphi tilde}), we write the estimate (\ref{Theta Prisma}) as follows:
\begin{align}
\left|\Theta(u_1,r_1)\right|\leq& \int_{0}^{u_1} \exp\left\{\int_{u}^{u_1} \left[\frac{1}{2r}\left|g_1-\tilde{g}_1\right|+4\pi r g_2V(\bar{h}_1)\right]_{\chi_1}\mathrm{d}u'\right\}\left|[\tilde{\varphi}]_{\chi_1}\right|\mathrm{d}u\nonumber\\
\leq&Cy\left[\alpha(x)+\beta(x)+\gamma(x)+\sigma(x)+x^2+x^{p-1}+x^{p+1}+x^{p+3}\right]\exp[C(x^2+x^{p+1})]\nonumber\\
&\times\int_{0}^{u_1}\frac{\mathrm{d}u}{(1+u)^{3(k-2)}(1+r+u)^2}.
\end{align}
Again, using the integral formulation (\ref{integral}), we obtain
\begin{align}\label{estimate Theta}
\left|\Theta(u_1,r_1)\right|\leq\frac{Cy\left[\alpha(x)+\beta(x)+\gamma(x)+\sigma(x)+x^2+x^{p-1}+x^{p+1}+x^{p+3}\right]\exp[C(x^2+x^{p+1})]}{\kappa^{k-1}(1+r_1+u_1)^{k-1}}.
\end{align}

From the definition of (\ref{norm y}), we have
\begin{align}\label{Lipschitz condition}
\|\Theta\|_Y \leq\tilde{\Lambda}_2y
\end{align}
with
\begin{align}\label{Lipschitz constant}
\tilde{\Lambda}_2(x)=\frac{C_3}{\kappa^{k-1}}\left[\alpha(x)+\beta(x)+\gamma(x)+\sigma(x)+x^2+x^{p-1}+x^{p+1}+x^{p+3}\right]\exp[C_4(x^2+x^{p+1})],
\end{align}
where $\alpha(x),\beta(x),\gamma(x),$ and $\sigma(x)$ are defined in (\ref{alpha})-(\ref{sigma}) respectively.

We find that $\tilde{\Lambda}_2(0)=0$ and $\tilde{\Lambda}'_2(0)>0$. Then, $\tilde{\Lambda}_2$ is monotonically increasing on $\tilde{x}_0\in \mathbb{R}^+$. There exists $x_2\in \mathbb{R}^+$ such that $\tilde{\Lambda}_2(x)<1$ for all $x$ in $(0,x_2]$. 
Hence, the mapping $h\mapsto\mathcal{F}(h)$ is contraction in $Y$ for $\|h\|_X\leq x_2$. This completes the proof.	
\end{proof}

\section{Global Existence of Classical Solution}
This section is devoted to proving that $h$ is a global classical solution of the equation (\ref{Dh}).
\begin{theorem} \label{Theorem 1}
	Given the initial data $h(0,r)\in C^1[0,\infty)$. Let $X$ and $Y$ be the function spaces defined by (\ref{space X}) and (\ref{space Y}) respectively. There exists a global classical solution of Equation (\ref{Dh}) for $k\in[3,\infty)$ and $p\in[k,\infty)$ such that
	\begin{eqnarray}
	h(u,r)\in C^1 ([0,\infty)\times [0,\infty)). 
	\end{eqnarray}
\end{theorem}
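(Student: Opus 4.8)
The plan is to realize $h$ as the unique fixed point of the map $\mathcal{F}$ constructed in the proof of Lemma \ref{Lemma 1}, by feeding the a priori decay bounds of Lemma \ref{Lemma 1} into the self-mapping and contraction statements of Lemma \ref{Lemma 2} and invoking the Banach fixed point theorem, and then to promote that fixed point to a classical solution of (\ref{Dh}).

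First I would fix the parameters. Choose $x>0$ small enough that simultaneously $x<x_1$ (so that $\kappa=\kappa(x)\in(0,1]$ is defined and the characteristic bound (\ref{ru2}) is available), $x\le x_0$ (so that $\tilde{\Lambda}_1$ is monotone increasing and the self-mapping property holds once $d$ is small), and $x\le x_2$ (so that the contraction constant $\tilde{\Lambda}_2(x)<1$). Then impose on the initial data the decay hypotheses $h(0,r)=O(r^{-(k-1)})$, $\partial_r h(0,r)=O(r^{-k})$ of Lemma \ref{Lemma 1} together with the smallness condition $d=\|h(0,\cdot)\|_{X_0}<\tilde{\Lambda}_1(x)$. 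With these choices Lemma \ref{Lemma 2} gives $\mathcal{F}:B(0,x)\to B(0,x)$ and $\|\mathcal{F}(h_1)-\mathcal{F}(h_2)\|_Y\le\tilde{\Lambda}_2\|h_1-h_2\|_Y$ with $\tilde{\Lambda}_2<1$, while Lemma \ref{Lemma 1} guarantees that every image $\mathcal{F}(h)$ obeys (\ref{decay1})–(\ref{decay2}).

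The one structural subtlety is that the self-mapping lives in the strong space $X$ but the contraction is only with respect to the weaker norm $\|\cdot\|_Y$; the remedy is the familiar two-norm fixed point argument. Since $\|f\|_Y\le\|f\|_X$, the closed ball $B(0,x)\subset X$ is a bounded subset of $Y$, and I would verify that it is \emph{complete} for the metric induced by $\|\cdot\|_Y$: a $Y$-Cauchy sequence $\{h_n\}\subset B(0,x)$ converges in $Y$, hence locally uniformly, to some limit $h$; the uniform bound $\|h_n\|_X\le x$ makes $\{\partial_r h_n\}$ locally bounded and, on each compact rectangle, equicontinuous (via Arzelà–Ascoli together with a diagonal extraction, or directly from the equation), so a subsequence of $\partial_r h_n$ converges locally uniformly; this identifies $h$ as $C^1$ with $\partial_r h=\lim\partial_r h_n$, and lower semicontinuity of the weighted sup-norm yields $\|h\|_X\le x$, i.e. $h\in B(0,x)$. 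The Banach fixed point theorem then produces a unique $h\in B(0,x)$ with $\mathcal{F}(h)=h$; in particular $h\in C^1([0,\infty)\times[0,\infty))$ and $h$ inherits the decay estimates (\ref{decay1})–(\ref{decay2}).

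Finally I would check that this fixed point actually solves (\ref{Dh}) in the classical sense. By construction $\mathcal{F}(h)$ is the solution of the linear transport equation (\ref{DF4 Prisma}) with initial condition (\ref{kondisi awal}); substituting $\mathcal{F}(h)=h$ turns (\ref{DF4 Prisma}) into (\ref{Dh}) and (\ref{kondisi awal}) into $h(0,r)=h_0(r)$. Since $h\in C^1$ and $\tilde{g},g,V(\bar{h})$ are $C^1$ in their arguments (the needed $r$-regularity of $g,\tilde{g}$ having already been quantified in (\ref{gtilde})–(\ref{gr}), and $\phi=\bar{h}$ from (\ref{g2})–(\ref{g tilde})), the relation $\partial_u h=\tfrac{\tilde{g}}{2}\partial_r h+(\text{right-hand side of (\ref{Dh})})$ shows $\partial_u h$ is continuous on $[0,\infty)\times[0,\infty)$, so indeed $h\in C^1$ there and solves (\ref{Dh}) globally; reconstructing $\phi,F,G$ then gives the corresponding classical solution of the Einstein–scalar system. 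The step I expect to be the main obstacle is precisely the completeness/closedness of $B(0,x)$ for the $Y$-metric and the attendant regularity bootstrap — reconciling the loss of one power of the weight between $X$ and $Y$ with the requirement that the fixed point be $C^1$ and keep the full decay — rather than any fresh estimate, since all the hard analytic work is already encapsulated in Lemmas \ref{Lemma 1} and \ref{Lemma 2}.
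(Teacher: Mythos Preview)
Your proposal follows essentially the same route as the paper: invoke Lemmas \ref{Lemma 1} and \ref{Lemma 2} to obtain a fixed point of $\mathcal{F}$ via Banach's theorem, then use the equation (\ref{Dh}) itself to recover control of $\partial_u h$ and conclude $h\in C^1([0,\infty)\times[0,\infty))$. The paper does exactly this, with the only cosmetic difference that it writes out an explicit weighted bound on $|\partial_u h|$ (see (\ref{estimate dhdu}) and Appendix \ref{Appendix4}) rather than merely asserting continuity of the right-hand side.

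Where you differ is in the treatment of the two-norm issue, and here you are in fact more careful than the paper. The paper simply asserts that since ``$Y$ contains $X$'' the contraction in $Y$ transfers to a contraction in $X$, which as stated is not a valid inference (a contraction in a weaker norm need not be a contraction in a stronger one). Your proposed remedy---showing that the $X$-ball $B(0,x)$ is closed in the $Y$-metric, so that the Banach theorem applies to the complete metric space $(B(0,x),\|\cdot\|_Y)$ and the fixed point automatically lies in $B(0,x)\subset X$---is the standard and correct way to close this gap, and it is what the argument actually requires. So your plan matches the paper's strategy while patching a point the paper leaves underjustified.
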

\begin{proof}
In Lemma \ref{Lemma 2}, we showed that $h\mapsto\mathcal{F}(h)$ is a contraction mapping in $Y$. Since $Y$ contains $X$, it is guaranteed that $h\mapsto\mathcal{F}(h)$ is also a contraction mapping in $X$. Using Banach's fixed theorem, there exists a unique fixed point $h\in X$ such that $\mathcal{F}(h)=h$. 

Now, we will show that $\left|\frac{\partial h}{\partial u}\right|$ is also bounded. From (\ref{Dh}), we have (see Appendix \ref{Appendix4})
\begin{eqnarray}\label{estimate dhdu}
\left|\frac{\partial h}{\partial u}\right|&\leq& \frac{\tilde{g}}{2}\left|\frac{\partial h}{\partial r}\right| + \frac{1}{2r}\left|g-\tilde{g}\right|\left|h\right| +\frac{1}{2r}\left|g-\tilde{g}\right|\left|\bar{h}\right|\nonumber\\
&&+4\pi gr\left|h\right||V(\bar{h})|+4\pi gr|\bar{h}||V(\bar{h})|+\frac{gr}{2}\left|\frac{\partial V(\bar{h})}{\partial \bar{h}}\right|\nonumber\\
&\leq&\frac{C(K_1(x)+K_2(x)+K_3+x^3+x^p+x^{p+2})}{(1+r+u)^2}.
\end{eqnarray}
This provides  that $\left|\frac{\partial h}{\partial u}\right|$ is bounded. The proof is finished.	
\end{proof}

\section{Appendix}

\subsection{The \{00\} and \{01\} components of the Einstein equations}
\label{Appendix Einstein}

We write the $\{00\}$ component of the Einstein equations (\ref{Einstein}) as follows
\begin{align} \label{00}
R_{00}=\frac{1}{2}g_{00}R + 8\pi T_{00}.
\end{align}
Combining (\ref{R00}), (\ref{Ricci scalar}), and (\ref{Energy-Momentum Tensor}), we write down Eq.(\ref{00}) such that
\begin{align}\label{Komponen 00}
&e^{2(F-G)}\left[-\frac{2}{r}\partial_1 G+\frac{1}{r^2}\right]+8\pi\left[\frac{1}{2}e^{2(F-G)}\left(\frac{\partial\phi}{\partial r}\right)^2 - e^{2F}V(\phi)+\left(\frac{\partial\phi}{\partial u}\right)^2-e^{F-G}\frac{\partial\phi}{\partial u}\frac{\partial\phi}{\partial r}\right]\nonumber\\
&=\frac{2}{r}e^{F-G}\partial_0 G.
\end{align}
The $\{01\}$ component of the Einstein equations (\ref{Einstein}) yields
\begin{align} \label{01}
R_{01}=\frac{1}{2}g_{01}R + 8\pi T_{01}.
\end{align}
As previously, from (\ref{R01}), (\ref{Ricci scalar}), and (\ref{Energy-Momentum Tensor}), we write down Eq.(\ref{01}) as follows
\begin{align}\label{Komponen 01}
&e^{2(F-G)}\left[-\frac{2}{r}\partial_1G+\frac{1}{r^2}\right]+8\pi\left[\frac{1}{2}e^{2(F-G)}\left(\frac{\partial\phi}{\partial r}\right)^2 - e^{2F}V(\phi)\right]=0.
\end{align}
Again, since there is no term containing the second derivative with respect to $u$ and $r$ on $F$ and $G$, this shows that $F$ and $G$ are no longer dynamical. Theorem \ref{Theorem 1} ensures that the metric functions $F$ and $G$ do globally exist.

\subsection{Estimate for (\ref{f prisma})}\label{Appendix1}
We write the estimate of (\ref{f prisma}) as follows:
	\begin{align}
	|f|\leq&\left|\frac{1}{2r}\left(g-\tilde{g}\right)\bar{h}\right| + \left|4\pi g rV(\bar{h})\bar{h}\right|+\left|\frac{gr}{2}\frac{\partial V(\bar{h})}{\partial \bar{h}}\right|\nonumber\\
	=& A_1+A_2+A_3.
	\end{align}

	\begin{enumerate}
		\item Estimate for $A_1$
		
		First, we estimate
		\begin{align}
		|\bar{h}|\leq \frac{1}{r}\int_{0}^{r}|h(u,s)|\mathrm{d}s	\leq \frac{x}{(k-2)}\frac{r^{k-3}}{(1+u)^{k-2}(1+r+u)^{k-2}}.
		\end{align}
		Then, we obtain
		\begin{align}
		|h(u,r)-h(u,r')|\leq&\int_{r'}^{r}\left|\frac{\partial h}{\partial s}(u,s)\right|~\mathrm{d}s\leq x\int_{r'}^{r}\frac{1}{(1+s+u)^k}~\mathrm{d}s\nonumber\\
		\leq&\frac{x}{k-1}\left[(1+r'+u)^{-k+1}-(1+r+u)^{-k+1}\right].
		\end{align}
		From the above estimate, we get
		\begin{align}
		|(h-\bar{h})(u,r)|\leq \frac{1}{r}\int_{0}^{r}|h(u,r)-h(u,r')|~\mathrm{d}r'\leq\frac{xr}{(k-1)(1+u)^{k-2}(1+r+u)^{k-1}},
		\end{align}
		and
		\begin{align}
		\int_{0}^{\infty}\frac{|h-\bar{h}|^2}{r}~\mathrm{d}r\leq&\frac{x^2 }{(k-1)^2(1+u)^{2(k-2)}}\int_{0}^{\infty}\frac{s}{(1+s+u)^{2(k-1)}}~\mathrm{d}s\nonumber\\
		=& \frac{x^2 }{(k-1)^2(12-14k+4k^2)(1+u)^{4(k-2)}}.
		\end{align}
		From (\ref{g2}), we get
		\begin{align}
		g(u,0)=\exp \left[-\int_{0}^{\infty}4\pi \frac{(h-\bar{h})^2}{s}\mathrm{d}s\right]\geq \exp \left[-\frac{4\pi x^2 }{(k-1)^2(12-14k+4k^2)}\right].
		\end{align}
		Thus,
		\begin{align}
		\bar{g}(u,0)=&\frac{1}{r}\int_{0}^{r}g(u,0)~\mathrm{d}r\geq\exp \left[-\frac{4\pi x^2 }{(k-1)^2(12-14k+4k^2)}\right].
		\end{align}
		Now, we estimate
		\begin{align}
		&|g(u,r)-g(u,r')|\leq\int_{r'}^{r}\left|\frac{\partial g}{\partial s}(u,s) \right|\mathrm{d}s\leq 4\pi \int_{r'}^{r}\frac{|h-\bar{h}|^2}{s}\mathrm{d}s\nonumber\\
		&=\frac{\pi x^2}{3k(k-1)^2(k-2)^2} \left[\frac{(1+u)^{4-2k}}{r'^2}+\frac{4k(1+r'+u)^{1-2k}}{1-2k}+\frac{4k^2(1+r'+u)^{1-2k}}{-1+2k}\right.\nonumber\\
		&+\frac{4ku(1+r'+u)^{1-2k}}{1-2k}+\frac{4k^2u(1+r'+u)^{1-2k}}{(-1+2k)}+\frac{(-1+k)(1+r'+u)^{1-2k}(1+(-1+2k)r'+u)}{-1+2k}\nonumber\\
		&-\frac{(1+u)^{4-2k}}{r^2}-\frac{4k(1+r+u)^{1-2k}}{1-2k}-\frac{4k^2(1+r+u)^{1-2k}}{-1+2k}-\frac{4ku(1+r+u)^{1-2k}}{1-2k}\nonumber\\
		&\left.-\frac{4k^2u(1+r+u)^{1-2k}}{(-1+2k)}-\frac{(-1+k)(1+r+u)^{1-2k}(1+(-1+2k)r+u)}{-1+2k}\right],
		\end{align}
		and
		\begin{align}
		|(g-\bar{g})(u,r)|\leq& \frac{1}{r}\int_{0}^{r}|g(u,r)-g(u,r')|\mathrm{d}r'\nonumber\\
		=& \frac{\pi x^2 r^2}{3k(k-1)^2(k-2)^2(1+u)^{-3+2k}(1+r+u)^3}.
		\end{align}
		We also estimate
		\begin{align}
		\int_{0}^{r}gs^2V(\bar{h})\mathrm{d}s\leq K_0\int_{0}^{r}gs^2|\bar{h}|^{p+1}\mathrm{d}s\leq\frac{K_0x^{p+1}r^3}{3(1+u)^{4(k-2)}(1+r+u)^3}.
		\end{align}
		Thus, we obtain
		\begin{align}
		\left|g-\tilde{g}\right|\leq|g-\bar{g}| + \frac{8\pi}{r}\int_{0}^{r}gs^2V(\bar{h})\mathrm{d}s\leq\frac{C(x^2+x^{p+1})r^2}{(1+u)^{-3+2k}(1+r+u)^3}.
		\end{align}
	
		Estimates (\ref{hbar}) and (\ref{g-gtilde}) yields
		\begin{align}\label{A1}
		A_1=\left|\frac{(g-\tilde{g})}{2r}\bar{h}\right|\leq\frac{C(x^3+x^{p+2})}{(1+u)^{3k-5}(1+r+u)^3}.
		\end{align}
		\item Estimate for $A_2$
		
		Using (\ref{hbar}), we obtain the estimate (\ref{A2}) as follows:
		\begin{align}\label{A2}
		A_2=\left|4\pi grV(\bar{h})\bar{h}\right|\leq \frac{Cx^{p+2}}{(1+u)^{5(k-2)}(1+r+u)^4}.
		\end{align}
		
		\item Estimate for $A_3$\\
		From (\ref{hbar}), we obtain
		\begin{align}\label{A3}
		A_3=\left|\frac{gr}{2}\frac{\partial V(\bar{h})}{\partial \bar{h}}\right|\leq\frac{Cx^p}{(1+u)^{3(k-2)}(1+r+u)^2}.
		\end{align}
	\end{enumerate}
	Combining (\ref{A1})-(\ref{A3}) yields
	\begin{align}
	|f|\leq\frac{C(x^3+x^p+x^{p+2})}{(1+u)^{3(k-2)}(1+r+u)^2}.
	\end{align}
\subsection{Estimate for (\ref{f1 Prisma})}\label{Appendix2}
We write the estimate of (\ref{f1 Prisma}) as follows:
	\begin{align}
	|f_1|\leq& \left[\frac{1}{2r}\left|\frac{\partial (g-\tilde{g})}{\partial r}\right| +\frac{1}{2r^2}|g-\tilde{g}|+\left|\frac{\partial g}{\partial r}\right|4\pi r|V(\bar{h})|+4\pi g|V(\bar{h})|+4\pi g r\left|\frac{\partial V(\bar{h})}{\partial\bar{h}}\right| \left|\frac{\partial\bar{h}}{\partial r}\right|\right] \left(|\mathcal{F}|+|\bar{h}|\right)\nonumber\\
	&+\left[\frac{1}{2r} |g-\tilde{g}|+4\pi g r|V(\bar{h})|+\frac{gr}{2}\left|\frac{\partial^2 V(\bar{h})}{\partial\bar{h}^2}\right|\right]\left|\frac{\partial \bar{h}}{\partial r}\right|+\left[\left|\frac{\partial g}{\partial r}\right|\frac{r}{2}+\frac{g}{2}\right]\left|\frac{\partial V(\bar{h})}{\partial\bar{h}}\right|\nonumber\\
	=& (B_1+B_2+B_3+B_4+B_5)\left(\left|\mathcal{F}\right|+\left|\bar{h}\right|\right) + (B_6+B_7+B_8)\left|\frac{\partial \bar{h}}{\partial r}\right| + B_9.
	\end{align}
	\begin{enumerate}
		\item Estimate for $B_1$
		
		Combining (\ref{hbar}), (\ref{g-gbar}), and (\ref{estimate1}) yields
		\begin{align}
		\left|\frac{\partial\tilde{g}}{\partial r}\right|\leq& \frac{C x^2 r}{(1+u)^{-3+2k}(1+r+u)^3}+\frac{8\pi K_0x^{k+1}r}{3(1+u)^{4(k-2)}(1+r+u)^3}+\frac{8\pi K_0x^{p+1}r}{(1+u)^{4(k-2)}(1+r+u)^{4}}\nonumber\\
		\leq& \frac{C(x^2+x^{k+1}+x^{p+1})r}{(1+u)^{-3+2k}(1+r+u)^3}.
		\end{align}
		Then, using (\ref{g tilde}) and (\ref{h-h}), we obtain
		\begin{align}
		\left|\frac{\partial g}{\partial r}\right|\leq \frac{4\pi x^2r }{(k-1)^2(1+u)^{2(k-2)}(1+r+u)^{2(k-1)}}.
		\end{align}
		Thus,
		\begin{align}\label{B1}
		B_1=\frac{1}{2r}\frac{\partial |g-\tilde{g}|}{\partial r}\leq \frac{C(x^2+x^{k+1}+x^{p+1})}{(1+u)^{2(k-2)}(1+r+u)^3}.
		\end{align}
		\item Estimate for $B_2$\\
		From (\ref{g-gtilde}), we obtain
		\begin{align}
		B_2=\frac{1}{2r^2}|g-\tilde{g}|\leq\frac{C(x^2+x^{p+1})}{(1+u)^{-3+2k}(1+r+u)^3}.
		\end{align}
		\item Estimaste for $B_3$\\
		We define $\frac{\partial\bar{h}}{\partial r}=\frac{h-\bar{h}}{r}$. From (\ref{hbar}) and (\ref{h-h}), we obtain
		\begin{align}
		B_3 =4\pi r\left|\frac{\partial g}{\partial r}\right||V(\bar{h})|\leq \frac{C x^{p+3}}{(1+u)^{6(k-2)}(1+r+u)^{2k}}.
		\end{align}
		\item Estimate for $B_4$\\
		From (\ref{hbar}), we obtain
		\begin{align}
		B_4=4\pi g V(\bar{h})\leq \frac{C x^{p+1}}{(1+u)^{4(k-2)}(1+r+u)^4}.
		\end{align}
		\item Estimate for $B_5$\\
		From (\ref{hbar}) and (\ref{gr}), we obtain
		\begin{align}
		B_5=4\pi g r \left|\frac{\partial V(\bar{h})}{\partial \bar{h}}\right|\frac{|h-\bar{h}|}{r}\leq \frac{C x^{p+1}}{(1+u)^{4(k-2)}(1+r+u)^{k+1}}.
		\end{align}
		\item Estimate for $B_6$\\
		From (\ref{g-gtilde}), we obtain
		\begin{align}
		B_6=\frac{1}{2r} |g-\tilde{g}|\leq \frac{C(x^2+x^{p+1})r}{(1+u)^{-3+2k}(1+r+u)^3}.
		\end{align}
		\item Estimate for $B_7$\\
		From (\ref{hbar}), we obtain
		\begin{align}
		B_7=4\pi g r V(\bar{h})\leq \frac{C x^{p+1}r}{(1+u)^{4(k-2)}(1+r+u)^4}.
		\end{align}
		\item Estimate for $B_8$\\
		Again, from (\ref{hbar}) we have
		\begin{align}
		B_8 =\frac{gr}{2}\left|\frac{\partial^2V(\bar{h})}{\partial \bar{h}^2}\right|\leq \frac{C x^{p-1}r}{(1+u)^{2(k-2)}(1+r+u)^2}.
		\end{align}
		\item Estimate for $B_9$\\
		From (\ref{hbar}) and (\ref{gr}), we obtain
		\begin{align}\label{B9}
		B_9=\left[\left|\frac{\partial g}{\partial r}\right|\frac{r}{2}+\frac{g}{2}\right]\left|\frac{\partial V(\bar{h})}{\partial \bar{h}}\right|\leq\frac{C x^{p+2}r}{(1+u)^{5(k-2)}(1+r+u)^{2k}}.
		\end{align}
	\end{enumerate}
	
	Combining (\ref{B1})-(\ref{B9}), yields
	\begin{align}
	|f_1|\leq& \frac{C(x^2+x^{k+1}+x^{p+1}+x^{p+3})}{(1+u)^{2(k-2)}(1+r+u)^3}|\mathcal{F}|+\frac{C(x^3+x^{k+2}+x^{p+2}+x^{p+4})}{(1+u)^{3(k-2)}(1+r+u)^4}\nonumber\\
	&+\frac{C(x^3+x^{p}+x^{p+2})r}{(1+u)^{3(k-2)}(1+r+u)^{k+1}}.
	\end{align}
\subsection{Estimate for (\ref{phi tilde})}\label{Appendix3}
We write the estimate of (\ref{phi tilde}) as follows:
\begin{align}
|\tilde{\varphi}|\leq&\frac{1}{2}\left|\tilde{g}_1-\tilde{g}_2\right||\mathcal{G}_2|+\frac{1}{2r}\left|g_1-\tilde{g}_1\right|| \bar{h}_1 -\bar{h}_2|+ \frac{1}{2r}\left|g_1-\tilde{g}_1+g_2-\tilde{g}_2\right||\mathcal{F}_2-\bar{h}_2|\nonumber\\
&+4\pi r|g_1-g_2| |\mathcal{F}_1||V(\bar{h}_1)|+4\pi r|g_1-g_2||\bar{h}_1| |V(\bar{h}_1)|+4\pi r g_2\left|V(\bar{h}_1)-V(\bar{h}_2)\right||\mathcal{F}_2|\nonumber\\
& + 4\pi r g_2|\bar{h}_1 - \bar{h}_2||V(\bar{h}_2)|+4\pi r g_2|\bar{h}_1|\left|V(\bar{h}_1)-V(\bar{h}_2)\right|\nonumber\\
&+\frac{r}{2}|g_1-g_2||\bar{h}_1|\left|\frac{\partial^2 V(\bar{h}_1)}{\partial \bar{h}_1^2}\right| +\frac{r}{2}g_2|\bar{h}_1-\bar{h}_2|\left|\frac{\partial^2 V(\bar{h}_1)}{\partial \bar{h}_1^2}\right|+\frac{r}{2}g_2|\bar{h}_1|\left|\frac{\partial^2 V(\bar{h}_1)}{\partial \bar{h}_1^2}-\frac{\partial^2 V(\bar{h}_2)}{\partial \bar{h}_2^2}\right|\nonumber\\
:=& D_1+D_2+D_3+...+D_{10}+D_{11}+D_{12}.
\end{align}
\begin{enumerate}
	\item Estimate for $D_1$\\
	Let us denote $\|h_1-h_2\|_Y=y$. From the definition of mean value, we have
	\begin{align}
	|\bar{h}_1 - \bar{h}_2|\leq& \frac{1}{r}\int_{0}^{r}|h_1-h_2|\mathrm{d}s\nonumber\\
	\leq&\frac{1}{r}\int_{0}^{r}\frac{\|h_1-h_2\|_Y}{(1+s+u)^{k-1}}\mathrm{d}s\nonumber\\
	\leq&\frac{y}{(k-2)}\frac{r^{k-3}}{(1+u)^{k-2}(1+r+u)^{k-2}}.
	\end{align}
	Thus,
	\begin{align}
	|h_1-h_2-(\bar{h}_1-\bar{h}_2)|\leq |h_1-h_2|+|\bar{h}_1-\bar{h}_2|\leq\frac{Cy}{(1+u)^{k-2}(1+r+u)}.
	\end{align}
	In view of (\ref{h-h}) and (\ref{h1-h2}), we have
	\begin{align}
	\left||h_1-\bar{h}_1|^2-|h_2-\bar{h}_2|^2\right|\leq& \left|(h_1-h_2)-(\bar{h}_1-\bar{h}_2) \right|\left(|h_1-\bar{h}_1|+|h_2-\bar{h}_2|\right)\nonumber\\
	\leq& \frac{Cxyr}{(1+u)^{2(k-2)}(1+r+u)^k}.
	\end{align}
	Then, we estimate
	\begin{align}\label{g1-g2 estimate}
	|g_1-g_2|\leq4\pi \int_{r}^{\infty}\frac{1}{s}\left||h_1-\bar{h}_1|^2-|h_2-\bar{h}_2|^2\right|\mathrm{d}s\leq  \frac{Cxy}{(1+u)^{2(k-2)}(1+r+u)^{k-1}}.
	\end{align}
	From the above estimate, we obtain
	\begin{align}
	|\bar{g}_1-\bar{g}_2|\leq\frac{1}{r}\int_{0}^{r}|g_1-g_2|\mathrm{d}s\leq\frac{Cxy}{(1+u)^{3(k-2)}(1+r+u)}.
	\end{align}
	Let us define
	\begin{align}
	\bar{h}_1^{p+1} - \bar{h}_2^{p+1}=(\bar{h}_1-\bar{h}_2)\int_{0}^{1}\left(t\bar{h}_1+(1-t)\bar{h}_2\right)\left|t\bar{h}_1+(1-t)\bar{h}_2\right|^{p-1}\mathrm{d}t,
	\end{align}
	such that
	\begin{align}
	\left||\bar{h}_1|^{p+1}-|\bar{h}_2|^{p+1}\right|\leq& |\bar{h}_1-\bar{h}_2|\left(|\bar{h}_1|+|\bar{h}_2|\right)^p\nonumber\\
	=&\frac{2^px^py}{(k-2)^2(1+u)^{(k-2)(p+1)}(1+r+u)^{p+1}}.
	\end{align}
	From the above estimate, we obtain
	\begin{align}
	\left|\frac{8\pi}{r}\int_{0}^{r}gs^2\left(V(\bar{h}_1)-V(\bar{h}_2)\right)\mathrm{d}s\right|\leq& \frac{8\pi}{r}\int_{0}^{r}s^2\frac{K_02^px^py}{(k-2)^2(1+u)^{(k-2)(p+1)}(1+r+u)^{p+1}}\mathrm{d}s\nonumber\\
	=& \frac{K_02^{p+3}\pi x^p y r^{k-2}}{(k-2)^2(1+u)^{(k-2)(p+2)}(1+r+u)^k}.
	\end{align}
	Combining (\ref{D1 1}) and (\ref{D1 2}), we have
	\begin{align}
	|\tilde{g}_1-\tilde{g}_2|\leq&|\bar{g}_1-\bar{g}_2|+\left|\frac{8\pi}{r}\int_{0}^{r}gs^2\left(V(\bar{h}_1)-V(\bar{h}_2)\right)\mathrm{d}s\right|\nonumber\\
	\leq&\frac{Cxy}{(1+u)^{3(k-2)}(1+r+u)}+\frac{K_02^{p+3}\pi x^p y r^{k-2}}{(k-2)^2(1+u)^{(k-2)(p+2)}(1+r+u)^k}\nonumber\\
	\leq&\frac{Cy(x+x^p)}{(1+u)^{3(k-2)}(1+r+u)}.
	\end{align}
	In view of (\ref{G Prisma}), we have
	\begin{align}
	D_1=&\frac{1}{2}\left|\tilde{g}_1-\tilde{g}_2\right||\mathcal{G}_2|\nonumber\\
	\leq&\frac{Cy(x+x^p)}{(1+u)^{3(k-2)}(1+r+u)}\frac{C\left(d+x^3+x^{k+2}+x^p+x^{p+2}+x^{p+4}\right)}{\kappa^k(1+r_1+u_1)^k}\nonumber\\
	&\times(1+x^2+x^{k+1}+x^{p+1}+x^{p+3})\exp\left[C(x^2+x^4+x^{p+1})\right].
	\end{align}
	From (\ref{ru2}), we obtain
	\begin{align}
	D_1\leq \frac{Cy\alpha(x)}{(1+u)^{3(k-2)}(1+r+u)^{k+1}},
	\end{align}
	where we have denoted
	\begin{align}
	\alpha(x)=&C(x+x^p)\left(d+x^3+x^{k+2}+x^p+x^{p+2}+x^{p+4}\right)(1+x^2+x^{k+1}+x^{p+1}+x^{p+3})\nonumber\\
	&\times\exp\left[C(x^2+x^4+x^{p+1})\right]\nonumber.
	\end{align}
	\item Estimate for $D_2$\\
	From (\ref{g-gtilde}), we obtain
	\begin{align}
	D_2=&\leq\frac{1}{2r}|g_1-\tilde{g}_1||\bar{h}_1-\bar{h}_2|\nonumber\\
	\leq&\frac{1}{2r}\frac{C(x^2+x^{p+1})r^2}{(1+u)^{-3+2k}(1+r+u)^3}\frac{y}{(k-2)}\frac{r^{k-3}}{(1+u)^{k-2}(1+r+u)^{k-2}}\nonumber\\
	\leq& \frac{C(x^2+x^{p+1})y}{(1+u)^{3k-5}(1+r+u)^{k}}.
	\end{align}
	\item Estimate for $D_3$\\
	From (\ref{h-h square}), we obtain
	\begin{align}
	\left|g_1-g_2-(\bar{g}_1-\bar{g}_2) \right|\leq&\frac{1}{r}\int_{0}^{r}\int_{r'}^{r}\left|\frac{\partial}{\partial r}(g_1-g_2)\right|\mathrm{d}s\mathrm{d}r'\nonumber\\
	\leq&\frac{4\pi}{r}\int_{0}^{r}\int_{r'}^{r}\frac{1}{s}\left||h_1-\bar{h}_1|^2-|h_2-\bar{h}_2|^2\right|\mathrm{d}s\mathrm{d}r'\nonumber\\
	\leq& \frac{Cxyr}{(1+u)^{2k-3}(1+r+u)^{k-1}}.
	\end{align}
	Thus,
	\begin{align}
	\frac{1}{2r}\left|g_1-\tilde{g}_1-(g_2-\tilde{g}_2)\right|\leq&\frac{1}{2r}\left[\left|g_1-g_2-(\bar{g}_1-\bar{g}_2)\right|+\left|\frac{8\pi}{r}\int_{0}^{r}gs^2\left(V(\bar{h}_1)-V(\bar{h}_2)\right)\mathrm{d}s\right|\right]\nonumber\\
	\leq& \frac{1}{2r}\left[ \frac{Cxyr}{(1+u)^{2k-3}(1+r+u)^{k-1}}+\frac{K_02^{p+3}\pi x^p y r^{k-2}}{(k-2)^2(1+u)^{(k-2)(p+2)}(1+r+u)^k}\right]\nonumber\\
	\leq&\frac{C(x+x^p)y}{(1+u)^{2k-3}(1+r+u)^{k-1}}.
	\end{align}
	In view of (\ref{F Prisma}), we have
	\begin{align}
	D_3=&\leq\frac{1}{2r}\left|g_1-\tilde{g}_1-(g_2-\tilde{g}_2)\right||\mathcal{F}_2|\nonumber\\
	\leq& \frac{C(x+x^p)y}{(1+u)^{2k-3}(1+r+u)^{k-1}}\frac{C(d+x^3+x^p+x^{p+2})\exp\left[C(x^2+x^{p+1})\right]}{\kappa^{k-1}(1+r_1+u_1)^{k-1}}.
	\end{align}
	From (\ref{ru2}), we obtain
	\begin{align}
	D_3\leq\frac{C y \beta(x)}{(1+u)^{2k-3}(1+r+u)^{2(k-1)}},
	\end{align}
	where we have denoted
	\begin{align}
	\beta(x)=C(x+x^p)(d+x^3+x^p+x^{p+2})\exp\left[C(x^2+x^{p+1})\right]\nonumber.
	\end{align}
	\item Estimate for $D_4$\\
	From (\ref{hp1-hp2}) and (\ref{F Prisma}), we obtain
	\begin{align}
	D_4=&4\pi r g_2\left(V(\bar{h}_1)-V(\bar{h}_2)\right)|\mathcal{F}_2|\nonumber\\
	\leq& \frac{K_0 r 2^px^py}{(1+u)^{p+1}(1+r+u)^{p+1}}\frac{C(d+x^3+x^p+x^{p+2})\exp\left[C(x^2+x^{p+1})\right]}{k^2(1+r_1+u_1)^2}.
	\end{align}
	From (\ref{ru2}), we obtain
	\begin{align}
	D_4\leq\frac{Cy \gamma(x)}{(1+u)^{4(k-2)}(1+r+u)^{k+2}},
	\end{align}
	where we have denoted
	\begin{align}
	\gamma(x)=C x^p(d+x^3+x^p+x^{p+2})\exp[C(x^2+x^{p+1})]\nonumber.
	\end{align}
	\item Estimate for $D_5$\\
	In view of (\ref{g1-g2}) and (\ref{hbar}), we get
	\begin{align}
	D_5=&\frac{1}{2r}\left|g_1-\tilde{g}_1-(g_2-\tilde{g}_2)\right||\bar{h}_2|\nonumber\\
	\leq&\frac{C(x+x^p)y}{(1+u)^{2k-3}(1+r+u)^{k-1}}\frac{x}{(k-2)(1+u)^{k-2}(1+r+u)}\nonumber\\
	\leq&\frac{C(x^2+x^{p+1})y}{(1+u)^{3k-5}(1+r+u)^k}.
	\end{align}
	\item Estimate for $D_6$\\
	From (\ref{hbar}), (\ref{g1-g2 estimate}), and (\ref{F Prisma}), we have
	\begin{align}
	D_6=&4\pi r|g_1-g_2||V(\bar{h}_1)| |\mathcal{F}_1|\nonumber\\
	\leq&Cr\frac{xy}{(1+u)^{2(k-2)}(1+r+u)^{k-1}}\frac{K_0x^{p+1}}{(k-2)^{p+1}(1+u)^{(k-2)(p+1)}(1+r+u)^{p+1}}\nonumber\\
	&\times\frac{C(d+x^3+x^p+x^{p+2})\exp\left[C(x^2+x^{p+1})\right]}{\kappa^{k-1}(1+r_1+u_1)^{k-1}}.
	\end{align}
	Then, from (\ref{ru2}) we obtain
	\begin{align}
	D_6\leq\frac{Cy\sigma(x)}{(1+u)^{6(k-2)}(1+r+u)^{2k+1}},
	\end{align}
	where we have denoted
	\begin{align}
	\sigma(x)=Cx^{p+2}(d+x^3+x^p+x^{p+2})\exp[C(x^2+x^{p+1})]\nonumber.
	\end{align}
	\item Estimate for $D_7$\\
	From (\ref{hbar}) and (\ref{g1-g2 estimate}), we obtain
	\begin{align}
	D_7 =&4\pi r|g_1-g_2||\bar{h}_1| |V(\bar{h}_1)|\nonumber\\
	\leq& Cr \frac{xy}{(1+u)^{2(k-2)}(1+r+u)^{k-1}}\frac{x}{(k-2)(1+u)^{k-2}(1+r+u)}\nonumber\\
	&\times\frac{K_0x^{p+1}}{(k-2)^{p+1}(1+u)^{(k-2)(p+1)}(1+r+u)^{p+1}}\nonumber\\
	\leq&\frac{Cyx^{p+3}}{(1+u)^{7(k-2)}(1+r+u)^{3+k}}.
	\end{align}
	\item Estimate for $D_8$\\
	Estimate (\ref{hbar}) and (\ref{hbar1-hbar2}) yields
	\begin{align}
	D_8 =&4\pi r g_2|\bar{h}_1 - \bar{h}_2||V(\bar{h}_2)|\nonumber\\
	\leq& Cr \frac{yr^{k-3}}{(k-2)(1+u)^{k-2}(1+r+u)^{k-2}}\frac{K_0x^{p+1}}{(k-2)^{p+1}(1+u)^{(k-2)(p+1)}(1+r+u)^{p+1}}\nonumber\\
	\leq&\frac{Cyx^{p+1}}{(1+u)^{5(k-2)}(1+r+u)^4}.
	\end{align}
	\item Estimate for $D_9$\\
	From (\ref{hbar}) and (\ref{hp1-hp2}), we have
	\begin{align}
	D_9 =&4\pi r g_2\left|V(\bar{h}_1)-V(\bar{h}_2)\right||\bar{h}_1|\nonumber\\
	\leq& \frac{4\pi K_02^px^pyr}{(k-2)^2(1+u)^{(k-2)(p+1)}(1+r+u)^{p+1}} \frac{x}{(k-2)(1+u)^{k-2}(1+r+u)}\nonumber\\
	\leq&\frac{Cyx^{p+1}}{(1+u)^{5(k-2)}(1+r+u)^4}.
	\end{align}
	\item Estimate for $D_{10}$\\
	Again, from (\ref{hbar}) and (\ref{g1-g2 estimate}) we have
	\begin{align}
	D_{10}=&\frac{r}{2}|g_1-g_2||\bar{h}_1|\left|\frac{\partial^2 V(\bar{h}_1)}{\partial \bar{h}_1^2}\right|\nonumber\\
	\leq& Cr\frac{xy}{(1+u)^{2(k-2)}(1+r+u)^{k-1}}\frac{K_0x^p}{(k-2)^p(1+u)^{p(k-2)}(1+r+u)^p}\nonumber\\
	\leq&\frac{Cyx^{p+1}}{(1+u)^{5(k-2)}(1+r+u)^{k+1}}.
	\end{align}
	\item Estimate for $D_{11}$\\
	In view of (\ref{hbar}) and (\ref{hbar1-hbar2}), we get
	\begin{align}
	D_{11} =&\frac{r}{2}g_2|\bar{h}_1-\bar{h}_2|\left|\frac{\partial^2 V(\bar{h}_2)}{\partial \bar{h}_2^2}\right|\nonumber\\
	\leq& Cr \frac{yr^{k-3}}{(k-2)(1+u)^{k-2}(1+r+u)^{k-2}}\frac{K_0x^{p-1}}{(k-2)^{p-1}(1+u)^{(k-2)(p-1)}(1+r+u)^{p-1}}\nonumber\\
	\leq&\frac{Cyx^{p-1}}{(1+u)^{3(k-2)}(1+r+u)^2}.
	\end{align}
	\item Estimate for $D_{12}$\\
	From (\ref{hp1-hp2}), we have
	\begin{align}
	\left|\bar{h}_1^{p-1}-\bar{h}_2^{p-1}\right|\leq& |\bar{h}_1-\bar{h}_2|\left(|\bar{h}_1|+|\bar{h}_2|\right)^{p-2}\nonumber\\
	\leq& \frac{K_0yx^{p-2}}{(1+u)^{(k-2)(p-1)}(1+r+u)^{p-1}}.
	\end{align}
	From the above estimate, combined with (\ref{hbar}) we have
	\begin{align}
	D_{12}=&\frac{r}{2}g_2\left|\frac{\partial^2 V(\bar{h}_1)}{\partial \bar{h}_1^2}-\frac{\partial^2 V(\bar{h}_2)}{\partial \bar{h}_2^2}\right||\bar{h}_1|\leq\frac{r}{2}g_2\left||\bar{h}_1|^{p-1}-|\bar{h}_2|^{p-1}\right||\bar{h}_1|\nonumber\\
	\leq& C r \frac{K_0yx^{p-2}}{(1+u)^{(k-2)(p-1)}(1+r+u)^{p-1}} \frac{x}{(k-2)(1+u)^{k-2}(1+r+u)}\nonumber\\
	\leq&\frac{Cyx^{p-1}}{(1+u)^{3(k-2)}(1+r+u)^2}.
	\end{align}
\end{enumerate}
Thus, we have
\begin{align}
|\tilde{\varphi}|\leq\frac{Cy\left[\alpha(x)+\beta(x)+\gamma(x)+\sigma(x)+x^2+x^{p-1}+x^{p+1}+x^{p+3}\right]}{(1+u)^{3(k-2)}(1+r+u)^2},
\end{align}
where $\alpha(x),\beta(x),\gamma(x),$ and $\sigma(x)$ are defined in (\ref{alpha})-(\ref{sigma}) respectively.

\subsection{Estimate for (\ref{estimate dhdu})}\label{Appendix4}
We write the estimate of (\ref{estimate dhdu}) as follows:
\begin{eqnarray}
\left|\frac{\partial h}{\partial u}\right|&\leq& \frac{\tilde{g}}{2}\left|\frac{\partial h}{\partial r}\right| + \frac{1}{2r}\left|g-\tilde{g}\right|\left|h\right| +\frac{1}{2r}\left|g-\tilde{g}\right|\left|\bar{h}\right|\nonumber\\
&&+4\pi gr\left|h\right||V(\bar{h})|+4\pi gr|\bar{h}||V(\bar{h})|+\frac{gr}{2}\left|\frac{\partial V(\bar{h})}{\partial \bar{h}}\right|\nonumber\\
&:=& H_1+H_2+H_3+H_4+H_5+H_6.
\end{eqnarray}
\begin{enumerate}
	\item Estimate for $H_1$\\
	From (\ref{G Prisma}), we have
	\begin{eqnarray}\label{estimate 1}
	\frac{\tilde{g}}{2}\left|\frac{\partial h}{\partial r}\right|\leq\frac{K_1(x)}{\kappa^{k}(1+r_1+u_1)^{k}}\leq \frac{K_1(x)}{(1+r+u)^k},
	\end{eqnarray}
	where we have denoted
	\begin{eqnarray}
	K_1(x)=&C(d+x^3+x^{k+2}+x^p+x^{p+2}+x^{p+4})(1+x^2+x^{k+1}+x^{p+1}+x^{p+3})\nonumber\\
	&\times\exp\left[C(x^2+x^4+x^{p+1})\right].
	\end{eqnarray}
	\item Estimate for $H_2$\\
	The estimate (\ref{F Prisma}) yields
	\begin{align}
	\frac{1}{2r}\left|g-\tilde{g}\right|\left|h\right|\leq&\frac{1}{2r} \frac{C(x^2+x^{p+1})r^2}{(1+u)^{-3+2k}(1+r+u)^3}\frac{C(d+x^3+x^p+x^{p+2})\exp\left[C(x^2+x^{p+1})\right]}{\kappa^{k-1}(1+r_1+u_1)^{k-1}}\nonumber\\
	\leq&\frac{K_2(x)}{(1+u)^{-3+2k}(1+r+u)^{k+1}},
	\end{align}
	where
	\begin{align}
	K_2(x)=C(x^2+x^{p+1})(d+x^3+x^p+x^{p+2})\exp\left[C(x^2+x^{p+1})\right].
	\end{align}
	\item Estimate for $H_3$
	\begin{align}
	\frac{1}{2r}\left|g-\tilde{g}\right|\left|\bar{h}\right|\leq& \frac{1}{r}\frac{C(x^2+x^{p+1})r^2}{(1+u)^{-3+2k}(1+r+u)^3}\frac{xr^{k-3}}{(k-2)(1+u)^{k-2}(1+r+u)^{k-2}}\nonumber\\
	\leq&\frac{C(x^3+x^{p+2})}{(1+u)^{3k-5}(1+r+u)^k}.
	\end{align}
	\item Estimate for $H_4$\\
	Combining (\ref{hbar}) and (\ref{F Prisma}), we obtain
	\begin{align}
	4\pi g r |h||V(\bar{h})|\leq& Cr\frac{C(d+x^3+x^p+x^{p+2})\exp\left[C(x^2+x^{p+1})\right]K_0x^{p+1}}{\kappa^{k-1}(1+r_1+u_1)^{k-1}(k-2)^{p+1}(1+u)^{(k-2)(p+1)}(1+r+u)^{p+1}}\nonumber\\
	\leq&\frac{K_3(x)}{(1+u)^{4(k-2)}(1+r+u)^{k+2}},
	\end{align}
	where
	\begin{eqnarray}
	K_3(x)=Cx^{p+1}(d+x^3+x^p+x^{p+2})\exp\left[C(x^2+x^{p+1})\right].
	\end{eqnarray}
	\item Estimate for $H_5$
	\begin{align}
	4\pi g r |\bar{h}||V(\bar{h})|\leq& Cr \frac{xr^{k-3}}{(k-2)(1+u)^{k-2}(1+r+u)^{k-2}}\frac{K_0x^{p+1}}{(k-2)^{p+1}(1+u)^{(k-2)(p+1)}(1+r+u)^{p+1}}\nonumber\\
	\leq&\frac{Cx^{p+2}}{(1+u)^{(k-2)(p+2)}(1+r+u)^{k+1}}.
	\end{align}
	\item Estimate for $H_6$\\
	The estimate (\ref{estimasi Potensial}) yields
	\begin{eqnarray}\label{estimate 6}
	\frac{gr}{2}\left|\frac{\partial V(\bar{h})}{\partial\bar{h}}\right|\leq \frac{Cx^p}{(1+u)^{p(k-2)}(1+r+u)^{p-1}}.
	\end{eqnarray}
\end{enumerate}

Combining (\ref{estimate 1}) - (\ref{estimate 6}), we write the estimate (\ref{estimate dhdu}) as follows
\begin{eqnarray}
\left|\frac{\partial h}{\partial u}\right|\leq\frac{C(K_1(x)+K_2(x)+K_3+x^3+x^p+x^{p+2})}{(1+r+u)^2}.
\end{eqnarray}

\section*{Acknowledgments}
The work of this research is supported by PPMI FMIPA ITB 2022, PPMI KK ITB 2022, and GTA 50 ITB. B.E.G. would like to acknowledge the support from the ICTP through the Associate's Programme (2017-2022). E.S.F. also would like to acknowledge the support from GTA Research Group ITB and from BRIN through the Research Assistant Programme 2022.

\section*{Data Availability}
This manuscript has no associated data.

\section*{Conflict of Interest}
No conflict of interest in this paper.

\appendix


\begin{thebibliography}{99}
\bibitem{Carrol}
Carroll, S. (2003). Spacetime and Geometry: An Introduction to General Relativity. Benjamin Cummings. ISBN: 0805387323

\bibitem{Chris1}
Christodoulou D. 1986. The problem of a self-gravitating scalar field.
\emph{Communications in Mathematical Physics}. \textbf{105}, 337–361.

\bibitem{Chris2}
Christodoulou D. 1986. Global Existence of Generalized Solutions of the Spherically Symmetric Einstein-Scalar Equations in the Large. 	\emph{Communications in Mathematical Physics}. \textbf{106}, 587–621.

\bibitem{Christodoulou:1994}
D.~Christodoulou. 1994.
Examples of naked singularity formation in the gravitational collapse of a scalar field.
Annals Math. \textbf{140}. 607-653
doi:10.2307/2118619

\bibitem{Christodoulou:1999}
Christodoulou, D. 1999. The instability of naked singularities in the gravitational collapse of a scalar field. Annals of Mathematics, 149, 183-217.

\bibitem{Malec}
Malec E. 1997. Self-gravitating nonlinear scalar fields. {Journal of Mathematical Physics} 38, 3650–67.

\bibitem{Chae}
Chae D. 2001. Global existence of spherically symmetric solutions to the coupled Einstein and nonlinear Klein-Gordon system. \emph{Classical and Quantum Gravity} \textbf{18}, 4589.

\bibitem{Nout}
Noutchegueme N and Nangue A. 2010. Global Existence to the Einstein-Scalar Field System On the Robertson–Walker Space-Times with Hyperbolic And Spherical Symmetries. \emph{Journal of Hyperbolic Differential Equations} Vo.7, No.1, 69-83.

\bibitem{Reiris}
Reiris M. 2017. On static solutions of the Einstein-Scalar Field equations. \emph{General Relativity and Gravitation}. \textbf{49}, 46.

\bibitem{Luk}
Luk J, Oh SJ, Yang S. 2018. Solutions to the Einstein-Scalar-Field System in Spherical Symmetry with Large Bounded Variation Norms. \emph{Annals of PDE}. \textbf{4}, 3.

\bibitem{Costa}
Costa J L and Mena F C. 2021. Global solutions to the spherically symmetric Einstein-scalar field system with a positive cosmological constant in Bondi coordinates \emph{Journal of Hyperbolic Differential Equations} Vol.18, No.2, 311-341.



\end{thebibliography}
\end{document}